\documentclass[preprint,11pt]{article}

\usepackage{amssymb,amsfonts,amsmath,amsthm,amscd,dsfont,mathrsfs}
\usepackage{graphicx,float,psfrag,epsfig,color,enumerate,hyperref}

\footnotesep 14pt
\floatsep 27pt plus 2pt minus 4pt      
\textfloatsep 40pt plus 2pt minus 4pt
\intextsep 27pt plus 4pt minus 4pt

\topmargin -0.3in  \headsep 0in  \textheight 9in \oddsidemargin
-0.07in  \evensidemargin -0.07in  \textwidth 6.6in


\newcommand{\argmax}{\operatornamewithlimits{argmax}}

\newcommand\T{\rule{0pt}{2.6ex}}

\newtheorem{propo}{Proposition}[section]
\newtheorem{lemma}[propo]{Lemma}
\newtheorem{definition}[propo]{Definition}

\newtheorem{theorem}[propo]{Theorem}

\newtheorem{remark}[propo]{Remark}

\newtheorem{claim}[propo]{Claim}

\newcommand{\mc}{\mathcal}

\newcommand{\alf}[2]{\alpha_{#1\backslash #2}}

\newcommand{\soff}[2]{\widehat{m}_{#1\rightarrow #2}}
\newcommand{\off}[2]{m_{#1\rightarrow #2}}

\newcommand{\baralf}{\underline{\alpha}}

\newcommand{\bargamma}{\underline{\gamma}}

\newcommand{\bary}{\underline{y}}
\newcommand{\baroff}{\underline{m}}
\newcommand{\barsoff}{\underline{\widehat{m}}}

\newcommand{\surp}{\mathcal{S}\textnormal{urp}}

\newcommand{\sT}{{\sf{T}}}
\newcommand{\sI}{{\sf{I}}}

\def\reals{{\mathds R}}

\def\di{{\partial i}}

\def\cA{{\cal A}}

\def\Poly{{\rm Poly}}
\def\eps{{\epsilon}}

\def\bot{\textup{bot}}

\def\reb{\textup{\tiny reb}}
\def\thr{\textup{\tiny thr}}
\def\ext{\textup{\tiny ext}}

\def\damp{\kappa}

\def\u0{\underline{0}}

\def\poly{\textup{poly}}
\def\rfl{{\cal R}}

\makeatletter
\def\equalsfill{$\m@th\mathord=\mkern-7mu
\cleaders\hbox{$\!\mathord=\!$}\hfill
\mkern-7mu\mathord=$}
\makeatother


\definecolor{Red}{rgb}{1,0,0}
\definecolor{Blue}{rgb}{0,0,1}
\definecolor{Olive}{rgb}{0.41,0.55,0.13}
\definecolor{Green}{rgb}{0,1,0}
\definecolor{MGreen}{rgb}{0,0.8,0}
\definecolor{DGreen}{rgb}{0,0.55,0}
\definecolor{Yellow}{rgb}{1,1,0}
\definecolor{Cyan}{rgb}{0,1,1}
\definecolor{Magenta}{rgb}{1,0,1}
\definecolor{Orange}{rgb}{1,.5,0}
\definecolor{Violet}{rgb}{.5,0,.5}
\definecolor{Purple}{rgb}{.75,0,.25}
\definecolor{Brown}{rgb}{.75,.5,.25}
\definecolor{Grey}{rgb}{.5,.5,.5}
\definecolor{Pink}{rgb}{1,0,1}
\definecolor{DBrown}{rgb}{.5,.34,.16}

\definecolor{Black}{rgb}{0,0,0}


\begin{document}

\begin{titlepage}

\title{An FPTAS for Bargaining Networks with Unequal Bargaining Powers}

\author{Yashodhan Kanoria\footnote{Part of this work was done while the author was visiting Microsoft Research New England.
The author is supported by a 3Com Corporation Stanford Graduate Fellowship.}\\
Department of Electrical Engineering\\
Stanford University\\
Email: ykanoria@stanford.edu}

\date{}

\maketitle
\thispagestyle{empty}

\begin{abstract}
Bargaining networks model social or economic situations in which agents seek to form the most
lucrative partnership with another agent from among several alternatives.
There has been a  flurry of recent research studying Nash bargaining solutions (also called `balanced outcomes')
in bargaining networks, so that
we now know when such solutions exist, and also that they can be computed efficiently, even by market agents behaving
in a natural manner.

In this work we study a generalization of Nash
bargaining, that models the possibility of unequal `bargaining powers'.
This generalization was introduced in \cite{OurNewArxiv}, where it was shown that
the corresponding `unequal division' (UD) solutions exist if and only if
Nash bargaining
solutions exist, and also that a certain local dynamics converges to UD solutions when they exist.
However, the bound on convergence time obtained for that dynamics
was exponential in network size for the unequal division case.
This bound is tight, in the sense that there
exists instances on which the dynamics of \cite{OurNewArxiv}
converges only after exponential time.
Other approaches, such as the one of Kleinberg and Tardos, do not generalize to the unsymmetrical case.
Thus, the question of computational tractability of UD solutions has
remained open.

In this paper, we provide an FPTAS for the computation of UD solutions, when such solutions exist.
On a graph $G=(V,E)$ with weights (i.e. pairwise profit opportunities) uniformly bounded
above by $1$, our FPTAS finds an $\eps$-UD solution in time $\poly(|V|,1/\eps)$.
We also provide a fast \emph{local} algorithm for finding $\eps$-UD solution,
providing further justification that a market can find such a solution.
\end{abstract}

\end{titlepage}

\section{Introduction}
\label{sec:intro}

Bargaining networks  serve as a model
for various social or economic interactions where agents seek to form pairs for mutual benefit
(e.g. \cite{CookY,NET,Lucas}).
Situations which can be modeled as such include a housing market with buyers and sellers, a job market with job seekers
and employers, or individuals seeking to form relationships and pair up. Bargaining networks are also referred to in
the literature
as `assignment markets' \cite{Rochford} or `exchange networks' \cite{Skvoretz,KT}.

A bargaining network is an undirected graph, with weights on the edges representing
potential profits if the corresponding pair of agents `trade' with each other (see Section
\ref{subsec:model} for formal definitions). Profit from
a trade is split between the participating agents as per a mutual agreement. Agents are
constrained on the number of trades they can participate in. A natural postulate
in this setting is that an outcome should be \textit{stable}, i.e. no pair of agents
should be able to do better by each abandoning a current partner and trading
with each other instead.
The solution concept of `balanced
outcomes' \cite{Rochford,CookY,KT} postulates further that each pair of agents that trade
must play the pairwise Nash bargaining solution \cite{Nash}, given
the behavior of the rest of the network. Thus, the `edge surplus' (cf. Eq.~\eqref{eq:surplus_defn}),
or the excess over the sum of `best alternatives'
for each of the two parties, is postulated to be split equally. This is called the \textit{balance} condition. 

However, it is natural to expect that such symmetry
is rare in practice, and that some players tend to have greater `bargaining power' than others. Such
bargaining power can arise due to a variety of reasons. For example, a more patient player has
more bargaining power, all else being equal. This
phenomenon is well known in the Rubinstein game
\cite{sequential_bargaining} where nodes alternately make
offers to each other until an offer is accepted -- the node
with less time discounting earns more in the subgame perfect Nash equilibrium.

Empirical findings confirm this. A recent experimental study of such networks \cite{Kearns_exp}
found that individual differences played a part in determining outcomes, including the observation that
patience correlated positively with earnings.
A previous study even estimates and `corrects' for the effects of particular subject pairs to better uncover
network structure effects \cite{Skvoretz}.
This leads us to ask if the concept of `balanced outcomes' can be
suitably generalized to account for such asymmetry. It turns out that there is, in fact a simple generalization
to the unsymmetrical case.
Our previous work \cite{OurNewArxiv} introduced the generalized concept of unsymmetrical
`unequal division'  (UD) solutions,
and also characterized the existence of such solutions.

Somewhat surprisingly, the various algorithms devised to compute solutions in the symmetric
setting fail to generalize to the unequal division setting (see also Section \ref{subsubsec:cooperative_game_connections}).
For example, the algorithm of Kleinberg and Tardos
\cite{KT} proceeds via
a sequence of linear programs that maximize the minimum `slack'.
This does not seem to have a simple generalization
to the asymmetric case.
Thus, the question of computational tractability of solutions for the unsymmetrical case in bargaining networks has been
open.

Besides computational tractability, another important question is ``Can a market find the solution concept on its own?"
In this line of work, one looks for simple, local mechanisms that converge to a solution concept.
Azar et al \cite{Azar} proposed such a local mechanism
for the bargaining networks problem.
The convergence result in that work showed an exponential bound on convergence time for
the symmetric case. Also, it does not generalize to the unsymmetrical case.
Our recent work on local dynamics in bargaining networks \cite{OurNewArxiv}
introduced a new analysis technique that provides a proof of convergence even for the unsymmetrical case,
and a polynomial bound on convergence to an approximate solution for the symmetrical case. However, a crucial issue
(see Section \ref{sec:stability_critical} of this paper)
led to a worst case exponential time to convergence in the unsymmetrical case.
In this paper we resolve this issue, providing a new efficient local algorithm for the unsymmetrical case.

\paragraph{Contributions.}
This work makes the following contributions in the context of bargaining networks:
\begin{itemize}
\item We establish computational tractability for bargaining networks with unequal bargaining
powers by providing
the first FPTAS for the corresponding `unequal division' solutions.
\item We provide a simple local algorithm and show that it converges fast to approximate unequal
division solutions. Specifically, it is a two phase algorithm: (i) The first phase consists of finding
the maximum weight matching and a stable allocation using belief propagation \cite{Bayati}.
(ii) The second phase consists of unsymmetrical edge balancing of the allocation, converging
to an approximate solution in polynomial time.
\end{itemize}

We note that the local algorithm we provide is similar to the one given by Azar et al \cite{Azar}
for the symmetrical case. In that work also there is a phase of matching using belief propagation, followed by
a phase of
edge balancing. However, several critical differences in both the design and the analysis of the algorithm enable us to
overcome limitations of their approach.

\subsection{Model}
\label{subsec:model}

A bargaining network consists of an undirected graph $G=(V,E)$ with positive weights on the edges,
denoted by $(w_e, e \in E) \in (0, W]^{|E|}$ (where $W>0$ denotes an arbitrary bound on weights).
Edges represent potential `trades', and weights
are the corresponding `profits'. Players are constrained on the number of trades they are allowed to participate in.
For simplicity, we will work with the \textit{one exchange rule}, i.e. each player is allowed to
participate in at most one trade. All our results easily generalize to the case of arbitrary
integral constraints on number of trades for each player.

If a pair of players trade with each other, the profit must be divided between them. Thus, a
\emph{trade outcome} or just an \emph{outcome} consists of a matching $M$ between players,
and an \textit{allocation} $\bargamma\in \reals_+^{|V|}$ such that $\gamma_i+\gamma_j=w_{ij}$
for each pair $(i,j) \in M$, and for each node $k\in V$ that is unmatched under $M$, $\gamma_k=0$.

Given a trade outcome $(\bargamma,M)$, we
define implicit \emph{offers} on all edges not in
$M$. Let $(x)_+ \equiv \max(x,0)$. For any $(i,j) \in E \backslash M$, node $i$ offers node $j$ an amount $(w_{ij} - \gamma_i)_+$, the idea being that
$i$ should be willing to switch partners if she can earn even slightly more. Thus, each node has a set
of well defined `alternatives' to its current partner in $M$. A natural postulate is that an outcome should
be \emph{stable}, i.e. for each node $i$, $\gamma_i$ should be no smaller than the best alternative
of node $i$ (if $i$ is unmatched under $M$, she should receive no
non-zero offers). The stability condition can be concisely written as $\gamma_i + \gamma_j \geq w_{ij}$
for all $(i,j) \in E \backslash M$.

Let $\partial i$ denote the set of neighbors of node $i$ in $G$.
For each edge $(ij) \in M$,  we define the
`edge surplus' as the excess of $w_{ij}$ over the sum of best alternatives, i.e.,
\begin{align}
\surp_{ij}(\bargamma) =  w_{ij} - \max_{k \in \partial i \backslash j} (w_{ik} - \gamma_k)_+
- \max_{l \in \partial j \backslash i} (w_{jl} - \gamma_l)_+ \, .
\label{eq:surplus_defn}
\end{align}
We can think of each node in the network as having an inherent `bargaining power', such that
  $\surp_{ij}$ should be split in a manner determined by the
bargaining powers of $i$ and $j$.
 We adopt a general model
where the surplus is postulated to be split as per a fraction $r_{ij} \in (0,1)$ for each matched edge $(ij) \in M$.
We call this \emph{correct division}.
Each $r_{ij}$ can be an arbitrary number in the interval $(0,1)$, independently for all edges.

\begin{definition}
A \emph{problem instance} $I$ consists
of an undirected graph $G=(V,E)$, with positive weights $(w_e)_{e\in E}$ and split fractions $(r_{ij})_{(ij) \in E} \in (0,1)^{|E|}$.
An arbitrary direction is chosen on each edge for purposes of specifying the split fraction.
If $r_{ij}$ is specified, then it is implicit that $r_{ji}=1-r_{ij}$.
\end{definition}

\begin{definition}[Correct division]
\label{def:correct_div}
An outcome $(\bargamma, M)$ is said to satisfy \emph{correct division} if, for all $(ij) \in M$,
\begin{align}
\gamma_i &= \max_{k \in \partial i \backslash j} (w_{ik} - \gamma_k)_+ + r_{ij} \surp_{ij}
\label{eq:correct_division}
\end{align}
where $r_{ji}=1-r_{ij}$ and $\surp_{ij}$ is defined by Eq.~\eqref{eq:surplus_defn}.
\end{definition}
\noindent Note that it follows from Eq.~\eqref{eq:correct_division} and Eq.~\eqref{eq:surplus_defn} that
$\gamma_j =w_{ij}-\gamma_i=  \max_{l \in \partial j \backslash i}\, (w_{jl} - \gamma_l)_+ + r_{ji} \surp_{ij}$.

\begin{definition}
\label{def:UD_soln}
An outcome $(\bargamma,M)$ is said to be an \emph{unequal division (UD) solution} if it is \emph{stable} and satisfies
\emph{correct division} (cf. Definition \ref{def:correct_div}).
\end{definition}

\subsection{Related work}
\label{subsec:related_work}

We present here a short review of relevant related work.

Recall the linear programming relaxation of the maximum weight matching problem
\begin{eqnarray}
\textup{maximize} && \sum_{(i,j) \in E} w_{ij} x_{ij},\nonumber\\
\textup{subject to}&&
\sum_{j\in \di} x_{ij} \le 1 \;\;\; \forall i \in V,\;\;\;\;\;\;\;
x_{ij}\ge 0\;\;\;\forall (i,j) \in E\,.
\label{prob:mwm_relaxation}
\end{eqnarray}

The dual problem to \eqref{prob:mwm_relaxation} is:
\begin{eqnarray}
\textup{minimize} && \sum_{i \in V} y_i,\nonumber\\
\textup{subject to}&&
y_i+y_j \ge w_{ij} \;\;\; \forall (i,j) \in E,\;\;\;\;\;\;\;
y_i \ge 0 \;\;\;\forall i \in V
\label{prob:mwm_dual}
\end{eqnarray}

Sotomayor \cite{Sotomayor} characterized the existence of stable outcomes in exchange networks.
\begin{lemma}[\cite{Sotomayor,KT}]
\label{lemma:stable_outcomes}
Stable outcomes exist if and only if the LP \eqref{prob:mwm_relaxation} has an integral optimum.
Further, if $(\bargamma,M)$ is a stable outcome, then $\bargamma$ is an optimum solution
of the dual LP \eqref{prob:mwm_dual} and $M$ is a maximum weight matching. Conversely, if the
LP \eqref{prob:mwm_relaxation} has an integral optimum, then for any maximum weight matching $M^*$
and any optimum $\bary^*$ of the dual LP \eqref{prob:mwm_dual}, $(\bary^*,M^*)$ is a stable outcome.
\end{lemma}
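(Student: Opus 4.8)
The plan is to derive the entire statement from weak and strong LP duality together with complementary slackness for the primal–dual pair \eqref{prob:mwm_relaxation}–\eqref{prob:mwm_dual}. Throughout I identify a matching $M$ with its indicator vector $\barx^{M}\in\{0,1\}^{|E|}$ (setting $x^{M}_{ij}=1$ iff $(ij)\in M$); this vector is always feasible for \eqref{prob:mwm_relaxation}, and its primal objective value is $\sum_{(ij)\in M} w_{ij}$, the weight of $M$. Since \eqref{prob:mwm_relaxation} is a relaxation of maximum weight matching, its optimum value is an upper bound on the weight of every matching.

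First I would handle a given stable outcome $(\bargamma, M)$, which also yields the second sentence of the lemma. Nonnegativity of $\bargamma$ plus the stability inequalities $\gamma_i+\gamma_j\ge w_{ij}$ on $E\setminus M$ and the equalities $\gamma_i+\gamma_j = w_{ij}$ on $M$ say precisely that $\bargamma$ is feasible for the dual \eqref{prob:mwm_dual}. Summing the matched-edge equalities and using $\gamma_k=0$ for every node $k$ unmatched under $M$ (each matched node being counted exactly once) gives $\sum_{(ij)\in M} w_{ij} = \sum_{i\in V}\gamma_i$: the primal value of $\barx^{M}$ equals the dual value of $\bargamma$. By weak duality the common value must simultaneously be the optimum of \eqref{prob:mwm_relaxation} and of \eqref{prob:mwm_dual}. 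Hence $\barx^{M}$ is an integral primal optimum (so \eqref{prob:mwm_relaxation} has an integral optimum), $M$ is a maximum weight matching (its weight equals the LP optimum, which dominates every matching), and $\bargamma$ is a dual optimum.

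For the converse, assume \eqref{prob:mwm_relaxation} has an integral optimum; then its optimum value equals the maximum matching weight, and by strong LP duality it also equals the optimum of \eqref{prob:mwm_dual}. Fix any maximum weight matching $M^{*}$ and any dual optimum $\bary^{*}$. The primal value of $\barx^{M^{*}}$ is the maximum matching weight, i.e. the LP optimum, so $\barx^{M^{*}}$ is a primal optimum, and $(\barx^{M^{*}},\bary^{*})$ satisfies complementary slackness: for each edge $(ij)\in M^{*}$ the dual constraint is tight, $y^{*}_i+y^{*}_j=w_{ij}$; and for each node $i$ with $y^{*}_i>0$ the packing constraint is tight, so $i$ is matched — equivalently $y^{*}_k=0$ whenever $k$ is unmatched under $M^{*}$. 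Thus $(\bary^{*},M^{*})$ is a valid trade outcome, and dual feasibility of $\bary^{*}$ gives $y^{*}_i+y^{*}_j\ge w_{ij}$ on all of $E$, which is exactly stability; so $(\bary^{*},M^{*})$ is a stable outcome.

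There is no deep obstacle here: the statement is essentially a dictionary between the combinatorial objects (matching, allocation, stability) and the LP pair. The only points needing care are the bookkeeping that converts the matched-edge equalities into the identity $\sum_{(ij)\in M} w_{ij}=\sum_{i} \gamma_i$ — where the convention $\gamma_k=0$ for unmatched $k$ is essential — and invoking complementary slackness in the correct direction so that it simultaneously recovers the allocation equalities on $M^{*}$ and the vanishing of $\bary^{*}$ on unmatched nodes.
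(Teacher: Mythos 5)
Your proof is correct: the weak-duality argument (matched-edge equalities plus $\gamma_k=0$ on unmatched nodes give equal primal and dual objective values) and the complementary-slackness argument for the converse are exactly the standard derivation behind this lemma, which the paper itself does not spell out but simply cites to \cite{Sotomayor,KT} with the remark that it follows from the stability inequalities. Nothing is missing; the bookkeeping points you flag (unmatched nodes earning $0$, and using CS to get both tightness on $M^{*}$ and vanishing of $\bary^{*}$ off the matching) are precisely the ones that matter.
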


The above lemma follows from the stability condition $\gamma_i+\gamma_j \geq w_{ij}$ for all $(ij) \in M$.
It implies, in particular, that all instances on bipartite graphs possess stable outcomes.

There have been several recent works on the symmetrical `balanced outcome' solution concept (corresponding to $r_{ij}=1/2$
for all $(ij) \in E$),
following a paper by Kleinberg and Tardos \cite{KT,Azar,Bateni,OurNewArxiv}.

Though our previous work \cite{OurNewArxiv} focuses
on the symmetrical case, it also introduces unequal division solutions. Further, it shows that unequal division
solutions exist if and only if Nash bargaining solutions exist.

\begin{theorem}[\cite{OurNewArxiv}]
\label{thm:UD_existence}
A problem instance admits a UD solution if and only if it admits a stable outcome (which occurs iff the LP \eqref{prob:mwm_relaxation} has an integral optimum).
\end{theorem}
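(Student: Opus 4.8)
The implications ``$(\bargamma,M)$ is a UD solution $\Rightarrow$ $(\bargamma,M)$ is stable'' and ``a stable outcome exists $\Leftrightarrow$ LP~\eqref{prob:mwm_relaxation} has an integral optimum'' are immediate from Definition~\ref{def:UD_soln} and Lemma~\ref{lemma:stable_outcomes}, so the entire content is the implication ``a stable outcome exists $\Rightarrow$ a UD solution exists''. The plan is to fix a maximum weight matching $M^*$ (which exists by Lemma~\ref{lemma:stable_outcomes}) and to search for a UD solution of the form $(\bargamma,M^*)$. Let $\mathcal K$ be the set of allocations $\bargamma$ for which $(\bargamma,M^*)$ is stable, i.e. $\bargamma\in\reals_+^{|V|}$ with $\gamma_i+\gamma_j=w_{ij}$ on $M^*$, $\gamma_k=0$ for $k$ unmatched by $M^*$, and $\gamma_i+\gamma_j\ge w_{ij}$ on $E\setminus M^*$. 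By Lemma~\ref{lemma:stable_outcomes} (take $\bargamma$ to be any optimum of the dual LP~\eqref{prob:mwm_dual}), $\mathcal K$ is nonempty, and it is clearly a compact convex polytope. Now define the \emph{correct-division operator} $T\colon\mathcal K\to\reals_+^{|V|}$ by $(T\bargamma)_k=0$ for $k$ unmatched and, for each $(ij)\in M^*$,
\[
(T\bargamma)_i \;=\; \max_{k\in\di\setminus j}(w_{ik}-\gamma_k)_+ \;+\; r_{ij}\,\surp_{ij}(\bargamma),
\]
with the symmetric formula (using $r_{ji}$) for $(T\bargamma)_j$. Since $\bargamma\in\mathcal K$ forces $\surp_{ij}(\bargamma)=(\gamma_i-\alpha_i)+(\gamma_j-\alpha_j)\ge 0$ (writing $\alpha_i$ for the best-alternative term), one checks $(T\bargamma)_i+(T\bargamma)_j=w_{ij}$ and $(T\bargamma)_i\ge\alpha_i\ge 0$; comparing with Eq.~\eqref{eq:correct_division}, an allocation $\bargamma\in\mathcal K$ with $T\bargamma=\bargamma$ is \emph{exactly} a UD solution. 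So it suffices to produce a fixed point of $T$ inside $\mathcal K$.

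\textbf{The obstacle and the fix.} The main obstacle is that $T$ need not map $\mathcal K$ into $\mathcal K$: on a non-matching edge $(ip)$, stability of $T\bargamma$ would require $(T\bargamma)_i+(T\bargamma)_p\ge w_{ip}$, and from the $\max$ in the definition together with stability of $\bargamma$ one only gets $(T\bargamma)_i\ge w_{ip}-\gamma_p$ and $(T\bargamma)_p\ge w_{ip}-\gamma_i$, whose sum $2w_{ip}-\gamma_i-\gamma_p$ can fall short of $w_{ip}$ on slack edges. The fix is to damp: for $\kappa\in(0,\tfrac12]$ set $T_\kappa=(1-\kappa)\,\mathrm{Id}+\kappa\,T$. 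The matched-edge equalities and nonnegativity are preserved by convexity, and on a non-matching edge $(ip)$,
\[
(T_\kappa\bargamma)_i+(T_\kappa\bargamma)_p \;\ge\; (1-\kappa)(\gamma_i+\gamma_p)+\kappa(2w_{ip}-\gamma_i-\gamma_p) \;=\; (1-2\kappa)(\gamma_i+\gamma_p)+2\kappa\, w_{ip} \;\ge\; w_{ip},
\]
using $\gamma_i+\gamma_p\ge w_{ip}$ and $1-2\kappa\ge 0$. Hence $T_\kappa(\mathcal K)\subseteq\mathcal K$, and $T_\kappa$ is continuous (affine maps composed with $\max$). Since $\mathcal K$ is nonempty, compact and convex, Brouwer's fixed-point theorem yields $\bargamma^\star\in\mathcal K$ with $T_\kappa\bargamma^\star=\bargamma^\star$; as $\kappa\neq 0$ this forces $T\bargamma^\star=\bargamma^\star$, so $(\bargamma^\star,M^*)$ is a UD solution, completing the nontrivial direction.

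\textbf{Alternatives and what is genuinely hard.} For bipartite $G$ one can replace Brouwer by Knaster--Tarski: order allocations by $\bargamma\preceq\bargamma'$ iff $\gamma_\ell\le\gamma'_\ell$ on one side of the bipartition; then $\mathcal K$ is a complete lattice and $T_\kappa$ is order-preserving (under $T$ the two sides' coordinates move monotonically, in opposite directions), which additionally shows the UD solutions form a lattice. For general $G$ a corresponding monotonicity would need a bipartite-like structure extracted from integrality of the matching LP (by complementary slackness the subgraph of tight edges carries no obstructing odd component), but this refinement is unnecessary if one only wants existence. Finally, \cite{OurNewArxiv} in fact obtains this existence result as a by-product of showing that a natural \emph{local} edge-balancing dynamics converges to a UD solution; that convergence statement is strictly stronger than mere existence of a fixed point and is where the real work lies -- a dedicated potential/contraction-type analysis -- and is precisely the part whose running time the present paper improves.
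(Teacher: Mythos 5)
Your proof is correct, and it reaches the existence statement by a genuinely different route than the paper. You correctly reduce everything to ``stable outcome $\Rightarrow$ UD solution'' via Lemma \ref{lemma:stable_outcomes}, and your two key observations are sound: (i) the undamped synchronous correct-division operator $T$ need not preserve stability (your lower bound $2w_{ip}-\gamma_i-\gamma_p$ can indeed fall below $w_{ip}$ on slack edges, and this failure is real, not just an artifact of the bound), and (ii) the damped operator $T_\kappa=(1-\kappa)\mathrm{Id}+\kappa T$ with $\kappa\le 1/2$ maps the compact convex polytope $\mathcal K$ of stable allocations for a fixed maximum weight matching into itself -- this self-map property is exactly the content of the paper's Lemma \ref{lemma:ER_preserves_stability}, and your computation $(1-2\kappa)(\gamma_i+\gamma_p)+2\kappa w_{ip}\ge w_{ip}$ is the same argument. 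From there you invoke Brouwer on $\mathcal K$ and observe that a fixed point of $T_\kappa$ with $\kappa\neq 0$ is a fixed point of $T$, hence a UD solution. The paper does not reprove the theorem (it is cited from \cite{OurNewArxiv}) but recovers it in Remark \ref{rem:Ishikawa_convergence} by a different mechanism: there $T$ is extended (via a thresholding step) to a map on all allocations for $M$, shown to be non-expansive in sup norm (proof of Lemma \ref{lemma:ER_converges_fast}), and Ishikawa's theorem on the Mann iteration \eqref{eq:mann_iterations} gives convergence of the damped dynamics to a fixed point, whose stability follows from Lemma \ref{lemma:ER_preserves_stability}; the original argument in \cite{OurNewArxiv} likewise goes through convergence of a dynamics. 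What each approach buys: yours is more elementary for bare existence (continuity plus compactness and convexity suffice; no non-expansiveness, no asymptotic-regularity machinery) but is non-constructive, while the paper's route simultaneously produces the convergent local dynamics and the quantitative rate behind the FPTAS, which is the point of this paper. Your bipartite Knaster--Tarski aside (lattice structure of the fixed points on one side's payoffs) is a correct and pleasant extra, though not needed for the theorem.
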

This generalizes a result of Kleinberg and Tardos for existence of balanced outcomes \cite{KT}.

\cite{OurNewArxiv} also shows that a certain local dynamics converges to UD solutions, when such solutions exist. However, the bound
on time to convergence is exponential in the network size (in contrast to the symmetrical case),
and this bound turns out to be tight in worst case (see Section \ref{sec:stability_critical}). Here, we resolve this issue, providing a
new FPTAS
for computing approximate UD solutions.

\subsubsection{Relationship to Cooperative games}
\label{subsubsec:cooperative_game_connections}
Recent work by Bateni et al
\cite{Bateni} shows that the bargaining network setting can be viewed as a cooperative game, making
this problem susceptible to a large body of literature (see also \cite{Rochford}).
This literature defines various solution concepts such as nucleolus, kernel and prekernel, and
also investigates means to compute these solutions for various classes of games.\footnote{\cite{Bateni}
shows that stable, balanced outcomes in bargaining networks
correspond to the core intersection prekernel.}
It is noteworthy that all the solution concepts studied are symmetric in the players.
Whereas such concepts may form a reasonable predictive framework in the absence of player specific information,
we also want to ask ``Can the players reach an appropriate `solution' when there is asymmetry?" To this
end, we would like to establish computational tractability in the asymmetric case.

However, a little investigation
reveals that the approaches devised to compute various (symmetric) solution concepts rely heavily on the
symmetry in their respective definitions.
For instance, the polynomial time algorithm in Faigle et al \cite{Faigle01} for finding a point in the
least core intersection prekernel
uses two components --a transfer scheme
and a linear programming based update-- neither of which work in the unsymmetrical case.

The situation is similar with regard to simple `transfer schemes' that converge to a solution concept.
For
the general cooperative game problem, Maschler proposed a simple transfer scheme to approximate points in
the prekernel. A version of this
scheme was shown to converge by Stearns, and a simpler proof of convergence
was provided by Faigle et al \cite{Faigle01}, in the general cooperative game setting.
However, both proofs suffer from two drawbacks: (a) they depend on the symmetry of the solution concepts, (b) the bound
on convergence time (if any) is exponential in network size.  Essentially the same transfer scheme was used in Azar et al \cite{Azar}
for bargaining networks (see \cite{Bateni} for the connection), and the proof of convergence suffered from the same drawbacks.

The current work addresses computational tractability for the asymmetric case in the bargaining network setting,
where an appropriate asymmetric solution concept can be readily defined.

\subsection{Outline of the paper}
We present our FPTAS in Section \ref{sec:main_results}, along with a proof that it returns an $\eps$-UD solution in polynomial time.
We present a fast local algorithm for this problem in subsection \ref{subsec:local_algorithm}. Each of the algorithms involve an
iterative `rebalancing' phase.
Section \ref{sec:lemma_proofs} contains proofs of
some key Lemmas used. In Section \ref{sec:stability_critical}, we demonstrate the importance of ensuring
that we stay within the subset of \emph{stable} allocations in our iterative updates. This insight is critically
used in our construction of an FPTAS. Appendix \ref{app:dualopt_from_Mstar}
shows a polynomial time local `reduction' from the problem of finding an $\eps$-UD solution to the problem of finding a maximum weight matching.

\section{Main results}
\label{sec:main_results}

First we define an approximate version of correct division, asking that
Eq.~\eqref{eq:correct_division} be satisfied to within an additive $\eps$, for all matched edges.
\begin{definition}[$\eps$-Correct division]
\label{def:eps_correct_div}
An outcome $(\bargamma, M)$ is said to satisfy \emph{$\eps$-correct division} if, for all $(ij) \in M$,
\begin{align}
|\gamma_i - \max_{k \in \partial i \backslash j} (w_{ik} - \gamma_k)_+ - r_{ij} \surp_{ij}(\bargamma)| \leq \eps
\label{eq:eps_correct_div}
\end{align}
where $\surp_{ij}(\cdot)$ is defined by Eq.~\eqref{eq:surplus_defn}.
\end{definition}

We define approximate UD solutions as follows:
\begin{definition}
An outcome $(\bargamma,M)$ is an \emph{$\eps$-UD solution} for $\eps \geq 0$
if it is \emph{stable} and it satisfies \emph{$\eps$-correct division} (cf. Definition \ref{def:eps_correct_div}).
\end{definition}

This is analogous to the definition of $\eps$-Nash equilibrium (see, e.g. \cite{approx_NE}).

It follows from Lemma \ref{lemma:stable_outcomes} that $\eps$-UD solutions exist iff the LP
\eqref{prob:mwm_relaxation} admits an integral optimum. This is the same as the requirement for
existence of UD solutions (see Theorem \ref{thm:UD_existence}). Our main result is the following:
\begin{theorem}
\label{thm:FPTAS}
There is a $\Poly(|V|, 1/\eps)$ algorithm such that for any problem instance with weights uniformly bounded by $1$,
i.e. $(w_e, e \in E) \in (0,1]^{|E|}$:
\begin{itemize}
\item If the instance admits a UD solution, the algorithm finds an $\eps$-UD solution.
\item If the instance does not admit a UD solution the algorithm
returns a message {\sc unstable}.
\end{itemize}
\end{theorem}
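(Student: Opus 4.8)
The plan is to realize the algorithm as a two-phase procedure following the local algorithm sketched in the introduction, with a rebalancing dynamics constrained to stay within the set of stable allocations. \emph{Phase 1 (existence and initialization).} First I would solve the LP relaxation \eqref{prob:mwm_relaxation}, compute a maximum weight matching $M^*$ of $G$, and compute an optimal solution $\bary^*$ of the dual \eqref{prob:mwm_dual}, all in time $\poly(|V|)$. If \eqref{prob:mwm_relaxation} has no integral optimum, then by Lemma~\ref{lemma:stable_outcomes} no stable outcome exists, hence by Theorem~\ref{thm:UD_existence} no UD solution exists; since an $\eps$-UD solution is in particular stable, the algorithm correctly outputs {\sc unstable}. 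Otherwise Lemma~\ref{lemma:stable_outcomes} guarantees that $(\bary^*, M^*)$ is a stable outcome, which we take as the starting point $(\bargamma^{(0)}, M)$, holding the matching $M := M^*$ fixed thereafter. (Appendix~\ref{app:dualopt_from_Mstar} shows this step, and hence the whole algorithm, can be realized locally starting from $M^*$.)

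\emph{Phase 2 (stability-constrained rebalancing).} Starting from $(\bargamma^{(0)}, M)$ I would repeatedly do the following. Pick a matched edge $(ij)\in M$, let $\gamma_i^*=\max_{k\in\di\setminus j}(w_{ik}-\gamma_k)_+ + r_{ij}\surp_{ij}(\bargamma)$ be the target prescribed by Eq.~\eqref{eq:correct_division}, and move the pair $(\gamma_i,\gamma_j)$ along the segment toward $(\gamma_i^*,\,w_{ij}-\gamma_i^*)$, stopping either at the target or at the last point for which the allocation remains stable, i.e.\ $\gamma_p+\gamma_q\ge w_{pq}$ for all $(pq)\in E\setminus M$ and $\gamma_k\ge 0$ for all $k$. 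Every iterate is then a stable allocation (equivalently, a dual optimum of \eqref{prob:mwm_dual} whose tight edges include $M$). Keeping the dynamics inside this polytope is the essential difference from the dynamics of \cite{OurNewArxiv}: Section~\ref{sec:stability_critical} exhibits instances on which, once this region is left, convergence requires exponential time.

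\emph{Convergence.} To bound the iteration count I would exhibit a potential function $\Phi$ on the polytope of stable allocations --- for concreteness, a suitably weighted sum over matched edges of the correct-division discrepancies $|\gamma_i-\gamma_i^*|$, perhaps after a change of variables tracking how the best-alternative quantities $\max_{k\in\di\setminus j}(w_{ik}-\gamma_k)_+$ propagate along alternating paths of $M$ --- and prove, as the key lemmas deferred to Section~\ref{sec:lemma_proofs}, that (i) each rebalancing step preserves stability and does not increase $\Phi$, and (ii) whenever some matched edge violates $\eps$-correct division, one sweep over all matched edges decreases $\Phi$ by at least $\delta=\poly(\eps/|V|)$. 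Since $w_e\le 1$ forces $0\le\gamma_i\le 1$ for all $i$ and thus $\Phi(\bargamma^{(0)})\le\poly(|V|)$, after $\poly(|V|,1/\eps)$ steps every matched edge satisfies Eq.~\eqref{eq:eps_correct_div}; the resulting outcome is stable by (i) and $\eps$-correct by construction, hence an $\eps$-UD solution.

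\emph{Main obstacle.} The crux is proving (i) and (ii) together in the asymmetric setting. Clamping the update to the stable polytope makes stability automatic but endangers progress, since a damped, incomplete step could in principle stall; so $\Phi$ and the rebalancing rule must be matched so that even a clamped step yields a definite decrease of $\Phi$. Moreover the contraction estimate may not rely on $r_{ij}=1/2$ --- precisely the symmetry exploited by the linear-programming argument of \cite{KT} and the transfer-scheme analyses of \cite{Azar,Faigle01}. I expect most of the work to be the construction of $\Phi$ via the best-alternative / alternating-path structure and the verification that stability of the iterates forces the per-sweep decrease $\delta=\poly(\eps/|V|)$, the normalization $w_e\le 1$ entering only to bound the initial potential and the iteration count.
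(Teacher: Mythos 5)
Your Phase~1 coincides with the paper's step~1 (find $M^*$ and a dual optimum, use Lemma~\ref{lemma:stable_outcomes} and Theorem~\ref{thm:UD_existence} to either certify {\sc unstable} or output a stable outcome), and your overall two-phase architecture is the paper's. The genuine gap is in Phase~2: you defer the entire quantitative content of the theorem to an unconstructed potential function $\Phi$ together with an unproven per-sweep decrease $\delta=\poly(\eps/|V|)$, and that is exactly the hard part. The paper does not argue this way at all. Its {\sc Edge Rebalancing} uses \emph{synchronous} updates $\bargamma^{t+1}=\damp\,\bargamma^{\reb}+(1-\damp)\bargamma^t$ with $\damp\le 1/2$; Lemma~\ref{lemma:ER_preserves_stability} shows stability is preserved automatically (via the claim $\gamma_i^{\reb}\ge\gamma_i^t-\sigma_{ik}^t$, no clamping needed), and Lemma~\ref{lemma:ER_converges_fast} recognizes the update as a Mann iteration $\bargamma^{t+1}=\damp\,\sT\bargamma^t+(1-\damp)\bargamma^t$ of an operator $\sT=\sT^{\thr}\sT^{\reb}$ that is non-expansive in sup norm on the bounded convex set $\cA_M$, so the Baillon--Bruck rate theorem gives $\lVert\sT\bargamma^t-\bargamma^t\rVert_\infty\le 1/\sqrt{\pi\damp(1-\damp)t}$ and hence termination within $\lceil 1/(\pi\damp(1-\damp)\eps^2)\rceil$ iterations, \emph{independent of} $|V|$. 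Your proposed route is not merely a different proof of the same bound: there is no evidence a discrepancy-based potential with polynomial per-sweep progress exists in the asymmetric setting --- such potentials are the backbone of the symmetric transfer-scheme analyses (Stearns, Faigle et al., Azar et al.) that the paper explicitly says fail to generalize, and even in the symmetric case they yielded only exponential bounds. So as written the proposal establishes Phase~1 and the reduction, but not the convergence statement that makes this an FPTAS.

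A secondary issue: clamping each asynchronous edge update to the boundary of the stable polytope is both unnecessary and counterproductive. Starting from a stable outcome, a \emph{full} rebalance of a single matched edge already lands in the stable set (since $\surp_{ij}\ge 0$ and $r_{ij}\in(0,1)$), and for synchronous updates the damping $\damp\le 1/2$ alone suffices; introducing a data-dependent stopping point on the segment would also destroy the identification of one sweep with a fixed non-expansive operator, which is precisely what the paper's $|V|$-independent rate (and the exact-convergence remark via Ishikawa) rests on. Your worry that ``a damped, incomplete step could in principle stall'' is thus resolved in the paper not by a progress-per-sweep potential but by the asymptotic regularity of damped iterations of non-expansive maps.
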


Our approach to finding an $\eps$-UD solution consists of two main steps:
\begin{enumerate}
\item Find a maximum weight matching $M^*$ and a dual optimum $\bargamma$ (solution to the dual LP \eqref{prob:mwm_dual}) .
Thus, form a stable outcome $(\bargamma,M^*)$. Else certify that the instance has no UD solution.
\item Iteratively update the allocation $\bargamma$ without changing the matching.
Updates are local, and are designed to converge fast to an allocation satisfying the $\eps$-correct division solution
\emph{while maintaining stability}. Thus, we arrive at an $\eps$-UD solution.
\end{enumerate}
As mentioned earlier, this is similar to the approach of \cite{Azar}. The crucial differences (enabling our results)
are:  (i) we stay within the space of stable outcomes, and (ii) our analysis of convergence.

First let us focus on obtaining an FPTAS using the steps above. Later we describe how to make the algorithm local.

Step 1 can be carried out by finding a maximum weight matching $M^*$ (e.g. \cite{PolyMWM}) and also solving the
the dual linear program \eqref{prob:mwm_dual}. For the dual LP, let $\cal{V}$ be the optimum value and let
$\bargamma$ be an optimum solution. We now use Lemma \ref{lemma:stable_outcomes}.
If the weight of $M^*$ is smaller than $\cal{V}$, we return {\sc unstable}, since we know
that no stable outcome exists, hence no UD solution (or $\eps$-UD solution) exists. Else,
$(\bargamma,M^*)$ is a stable outcome. This completes step 1! The computational effort involved is $\poly(|V|)$.

In step 2, we fix the matching $M^*$, and rebalance the edges iteratively. It turns out to be crucial that our iterative
updates preserve stability.  Section \ref{sec:stability_critical} demonstrates that the rebalancing procedure can take an
exponentially large time to reach an approximate UD solution if stability is not preserved. 

We motivate the rebalancing procedure
briefly, before we give a detailed description and state results. Imagine an edge $(i,j) \in M^*$. Since we start with a
stable outcome, the edge weight $w_{ij}$ is at least the sum of the best alternatives, i.e. $\surp_{ij}\geq0$. Suppose we change the
division of $w_{ij}$ into $\gamma_i'$, $\gamma_j'$ so that the $\surp_{ij}$ is divided as per the prescribed split
fraction $r_{ij}$. Earnings of all other nodes are left unchanged.
Since $r_{ij} \in (0,1)$, $\gamma_i'$ is at least as large as the best alternative of $i$, as was the case for $\gamma_i$.
This leads to $\gamma_i'+\gamma_k \geq w_{ik}$ for all $k \in \partial i \backslash j$. A similar argument holds for node $j$.
In short, \emph{stability is preserved}!

It turns out that the analysis of convergence is simpler if we analyze synchronous updates, as opposed to
asynchronous updates as described above. Moreover, we find that simply choosing an appropriate `damping factor' allows us to ensure that stability
is preserved even with synchronous updates. We use a powerful technique introduced in our recent work \cite{KT} to prove
convergence.


Table \ref{alg:edge_reb} shows the algorithm {\sc Edge Rebalancing} we use to complete step 2.
Note that each iteration
of the loop can requires $O(|E|)$ simple operations.\\

\begin{table}[t]
\caption{Local algorithm that converts stable outcome to $\eps$-UD solution}
\begin{tabular}{ll}
\hline
\multicolumn{2}{l}{ \T {\sc Edge Rebalancing}( Instance $I$, Stable outcome $(\bargamma,M)$, Damping factor $\damp$, Error target $\eps$)}\\[2pt]
\hline
1: & Check $\damp \in (0, 1/2]$, $\eps > 0$, $(\bargamma,M)$ is stable outcome \T \\[2pt]
2: & If (Check fails)\hspace{0.1cm} Return {\sc error}\\[2pt]
3: & $\bargamma^0 \leftarrow \bargamma$\\[2pt]
4: & $t \leftarrow 0$\\[2pt]
5: & Do\\[2pt]
6: & \hspace{0.4cm} ForEach $(i,j) \in M$\\[2pt]
7: & \hspace{0.9cm} $\gamma^{\reb}_i \leftarrow \max_{k \in \partial i \backslash j} (w_{ik} - \gamma_k^t)_+ + r_{ij} \surp_{ij}(\bargamma^t)$\\[2pt]
8: & \hspace{0.9cm} $\gamma^{\reb}_j \leftarrow \max_{l \in \partial j \backslash i}\, (w_{jl} \, - \gamma_l^t)_+ + r_{ji} \surp_{ij}(\bargamma^t)$\\[2pt]
9: & \hspace{0.4cm} End ForEach\\[2pt]
10: & \hspace{0.4cm} ForEach $i\in V$ that is unmatched under $M$\\[2pt]
11: & \hspace{0.9cm} $\gamma^{\reb}_i \leftarrow 0$\\[2pt]
12: & \hspace{0.4cm} End ForEach\\[2pt]
13: & \hspace{0.4cm} If $\left(\lVert\bargamma^{\reb} -\bargamma^t\rVert_\infty \leq \eps \right)$\hspace{0.1cm} Break Do \\[2pt]
14: & \hspace{0.4cm} $\bargamma^{t+1} =  \damp \bargamma^{\reb} + (1-\damp) \bargamma^t$\\[2pt]
15: & \hspace{0.4cm} $t \leftarrow t+1$ \\[2pt]
16: & End Do \\[2pt]
17: & Return $(\bargamma^t, M)$\\[2pt]
\hline
\end{tabular}
\label{alg:edge_reb}
\end{table}

\noindent{\bf Correctness  of {\sc Edge Rebalancing}:}\\
A priori, it is not clear that $\bargamma^t$ computed by {\sc Edge Rebalancing} is a stable allocation (or even
an allocation) corresponding to $M$, for $t>0$.
The following lemma eliminates this concern.

\begin{lemma}
\label{lemma:ER_preserves_stability}
If {\sc Edge Rebalancing} is given a valid input satisfying the `Check' on line 1, then
$(\gamma^t, M)$ is a stable outcome for all $t \geq 0$.
\end{lemma}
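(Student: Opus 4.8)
The plan is to induct on $t$. The base case $t=0$ holds by hypothesis, since the `Check' on line 1 guarantees $(\bargamma^0,M) = (\bargamma,M)$ is a stable outcome. For the inductive step, assume $(\bargamma^t,M)$ is a stable outcome; I want to show the same for $(\bargamma^{t+1},M)$, where $\bargamma^{t+1} = \damp\bargamma^{\reb} + (1-\damp)\bargamma^t$ with $\damp \in (0,1/2]$. The argument has two parts: first, show that the rebalanced vector $\bargamma^{\reb}$ (before damping) is an \emph{allocation} for $M$ and is \emph{stable}; second, show that stability is preserved under the convex combination with damping factor $\damp$.

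For the first part, I would argue exactly as in the informal motivation given before the algorithm box, but carefully. Fix $(i,j)\in M$. Since $(\bargamma^t,M)$ is stable, every best-alternative offer into $i$ is at most $\gamma_i^t$, and similarly for $j$, so $\surp_{ij}(\bargamma^t) = w_{ij} - \max_{k\in\di\backslash j}(w_{ik}-\gamma_k^t)_+ - \max_{l\in\dj\backslash i}(w_{jl}-\gamma_l^t)_+ \geq 0$. Lines 7--8 set $\gamma_i^{\reb}$ and $\gamma_j^{\reb}$ to the two best alternatives plus $r_{ij}\surp_{ij}$ and $r_{ji}\surp_{ij}$ respectively; since $r_{ij}+r_{ji}=1$, these sum to $w_{ij}$, so $\bargamma^{\reb}$ satisfies the matching constraint on $(i,j)$, and for unmatched $i$ line 11 sets $\gamma_i^{\reb}=0$, so $\bargamma^{\reb}$ is an allocation for $M$. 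Moreover, because $r_{ij}\in(0,1)$ and $\surp_{ij}\geq 0$, we get $\gamma_i^{\reb}\geq \max_{k\in\di\backslash j}(w_{ik}-\gamma_k^t)_+ \geq (w_{ik}-\gamma_k^t)_+ \geq w_{ik}-\gamma_k^t$ for every $k\in\di\backslash j$; hence $\gamma_i^{\reb}+\gamma_k^t\geq w_{ik}$. The key subtlety is that $\gamma_i^{\reb}$ is bounded below in terms of the \emph{old} coordinates $\gamma_k^t$, not $\gamma_k^{\reb}$, so at this point I only get a ``mixed'' stability inequality; it will be the damping step that repairs this.

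For the second part, I need $\gamma_i^{t+1}+\gamma_j^{t+1}\geq w_{ij}$ for all $(i,j)\in E\backslash M$. Write $\gamma_i^{t+1} = \damp\gamma_i^{\reb} + (1-\damp)\gamma_i^t$. If neither endpoint of an edge $(i,j)\notin M$ is matched under $M$ the constraint is vacuous (no offers); the interesting case is when at least one endpoint, say $i$, is matched, to some $j'$. Then from the first part $\gamma_i^{\reb}\geq w_{ij}-\gamma_j^t$ (taking $k=j$ in the bound above, valid since $j\in\di\backslash j'$), i.e. $\gamma_i^{\reb}+\gamma_j^t\geq w_{ij}$; and by the inductive hypothesis $\gamma_i^t+\gamma_j^t\geq w_{ij}$. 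Now
\begin{align}
\gamma_i^{t+1}+\gamma_j^{t+1}
&= \damp(\gamma_i^{\reb}+\gamma_j^{\reb}) + (1-\damp)(\gamma_i^t+\gamma_j^t) \no\\
&\geq \damp(\gamma_i^{\reb}+\gamma_j^{\reb}) + (1-\damp)w_{ij}.\no
\end{align}
I then need a lower bound on $\gamma_j^{\reb}$ in terms of $\gamma_j^t$ to close the gap. If $j$ is matched, the same argument gives $\gamma_j^{\reb}\geq w_{ij}-\gamma_i^t$, so $\gamma_i^{\reb}+\gamma_j^{\reb}\geq 2w_{ij}-(\gamma_i^t+\gamma_j^t)$; combined with the inductive hypothesis $\gamma_i^t+\gamma_j^t\geq w_{ij}$ this yields $\gamma_i^{\reb}+\gamma_j^{\reb}\geq w_{ij}$, hence $\gamma_i^{t+1}+\gamma_j^{t+1}\geq w_{ij}$ and we are done. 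If $j$ is unmatched, then $\gamma_j^t=\gamma_j^{\reb}=0$ and stability of $(\bargamma^t,M)$ forces $w_{ij}\leq\gamma_i^t$; then $\gamma_i^{t+1}=\damp\gamma_i^{\reb}+(1-\damp)\gamma_i^t\geq \damp(w_{ij}) + (1-\damp)w_{ij}\cdot$ — more simply, $\gamma_i^{\reb}\geq w_{ij}-\gamma_j^t = w_{ij}$ and $\gamma_i^t\geq w_{ij}$, so the convex combination $\gamma_i^{t+1}\geq w_{ij} = w_{ij}+\gamma_j^{t+1}$, as needed. Finally I must check $\gamma^{t+1}\geq 0$: each $\gamma_i^{\reb}\geq 0$ (it is a max of nonnegative terms plus a nonnegative quantity, or $0$), $\gamma_i^t\geq 0$ by hypothesis, and $\damp\in(0,1/2]\subset[0,1]$, so the convex combination is nonnegative, and unmatched nodes stay at $0$.

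The main obstacle is the second part: the naive attempt to combine the two stability-type inequalities for $\bargamma^{\reb}$ and $\bargamma^t$ separately does not immediately work because the rebalanced bound $\gamma_i^{\reb}\geq w_{ij}-\gamma_j^t$ mixes new and old coordinates. The resolution — using $\gamma_i^{\reb}+\gamma_j^{\reb}\geq 2w_{ij}-(\gamma_i^t+\gamma_j^t)$ together with $\gamma_i^t+\gamma_j^t\geq w_{ij}$ to conclude $\gamma_i^{\reb}+\gamma_j^{\reb}\geq w_{ij}$, so that \emph{every} term in the convex combination for $\gamma_i^{t+1}+\gamma_j^{t+1}$ is at least $w_{ij}$ — is the crux, and it is precisely here that $r_{ij}\in(0,1)$ (and not the damping value) does the work; the role of $\damp\in(0,1/2]$ is only to keep the update a convex combination (its finer role is needed later for convergence, not for this lemma). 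I would also remark that the same reasoning shows each $\bargamma^t$ is an allocation for $M$, not merely stable.
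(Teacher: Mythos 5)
There is a genuine gap at the crux of your argument, and it is exactly the point this lemma hinges on. From $\gamma_i^{\reb}\geq w_{ij}-\gamma_j^t$ and $\gamma_j^{\reb}\geq w_{ij}-\gamma_i^t$ you correctly get $\gamma_i^{\reb}+\gamma_j^{\reb}\geq 2w_{ij}-(\gamma_i^t+\gamma_j^t)$, but combining this with the inductive hypothesis $\gamma_i^t+\gamma_j^t\geq w_{ij}$ does \emph{not} yield $\gamma_i^{\reb}+\gamma_j^{\reb}\geq w_{ij}$: writing $\sigma^t_{ij}=\gamma_i^t+\gamma_j^t-w_{ij}\geq 0$, your lower bound is $w_{ij}-\sigma^t_{ij}$, which is at most $w_{ij}$, so the inequality points the wrong way. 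In fact the synchronously rebalanced vector $\bargamma^{\reb}$ need not be stable at all. Take the path $1-2-3-4$ with all weights $1$, $M=\{(1,2),(3,4)\}$, the stable allocation $\bargamma^t=(0,1,1,0)$, and split fractions giving node $2$ a share $0.1$ of $\surp_{12}=1$ and node $3$ a share $0.1$ of $\surp_{34}=1$; then $\gamma_2^{\reb}+\gamma_3^{\reb}=0.2<1=w_{23}$. Since you also discarded the surplus by bounding $(1-\damp)(\gamma_i^t+\gamma_j^t)\geq(1-\damp)w_{ij}$, your chain of inequalities cannot be closed.

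The repair is the paper's argument, which keeps the surplus and uses the damping constraint: with $\gamma_i^{\reb}+\gamma_j^{\reb}\geq w_{ij}-\sigma^t_{ij}$ and $\gamma_i^t+\gamma_j^t=w_{ij}+\sigma^t_{ij}$,
\begin{align*}
\gamma_i^{t+1}+\gamma_j^{t+1}\;\geq\;\damp\,(w_{ij}-\sigma^t_{ij})+(1-\damp)(w_{ij}+\sigma^t_{ij})\;=\;w_{ij}+(1-2\damp)\,\sigma^t_{ij}\;\geq\;w_{ij},
\end{align*}
which holds precisely because $\damp\leq 1/2$. So your closing remark that $r_{ij}\in(0,1)$ does the work and that $\damp\in(0,1/2]$ ``only keeps the update a convex combination'' is backwards: the condition $\damp\leq 1/2$ is exactly what makes the lemma true (the paper states this explicitly), while $r_{ij}\in(0,1)$ only gives the one-sided bounds $\gamma_i^{\reb}\geq w_{ik}-\gamma_k^t$ against the \emph{old} coordinates. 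The rest of your plan (induction, the allocation/sum-to-$w_{ij}$ check on matched edges, nonnegativity, and the case of an unmatched endpoint) matches the paper and is fine; note only that an edge in $E\setminus M$ with both endpoints unmatched cannot occur in a stable outcome since weights are positive, rather than the constraint being ``vacuous.''
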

This guarantees that  {\sc Edge Rebalancing} returns an $\eps$-UD solution if it terminates (unless it returns {\sc error}).
The lemma is a straightforward consequence
of the constraint $\damp \leq 1/2$ (proof in Section \ref{sec:lemma_proofs}).

\noindent{\bf Convergence  of {\sc Edge Rebalancing}:}\\
Next we need to show that the rebalancing algorithm terminates fast at an $\eps$-UD solution.
Note that the termination condition $\lVert\bargamma^{\reb} -\bargamma^t\rVert_\infty \leq \eps$ on Line 13 is
equivalent to $\eps$-correct division.
%

\begin{lemma}
\label{lemma:ER_converges_fast}
For any instance with weights bounded by $1$, i.e. $(w_e, e \in E) \in (0,1]^{|E|}$, if {\sc Edge Rebalancing}
is given a valid input, it terminates
in $T$ iterations, where
\begin{align}
T \leq \left \lceil \frac{1}{\pi \damp (1-\damp) \eps^2} \right \rceil \, ,
\label{eq:termination_time_bound}
\end{align}
and returns an outcome satisfying $\eps$-correct division (cf. Definition \ref{def:eps_correct_div}). Here $\pi = 3.14159\ldots$
\end{lemma}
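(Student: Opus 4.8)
The plan is to recast the {\sc Edge Rebalancing} iteration as a contraction with respect to a carefully chosen norm, following the technique alluded to in the paper (the one "introduced in our recent work \cite{KT}"). Concretely, I would regard the map $\bargamma^t \mapsto \bargamma^{\reb}$ (lines 6--12) as an operator $F$ on the space of allocations supported on $M$, and the actual update as $\bargamma^{t+1} = \damp F(\bargamma^t) + (1-\damp)\bargamma^t$. Since each coordinate of $F(\bargamma)$ is of the form $\max_{k}(w_{ik}-\gamma_k)_+ + r_{ij}\surp_{ij}(\bargamma)$, and $\surp_{ij}$ is itself a sum of two such $\max$-of-affine terms, $F$ is a composition of $1$-Lipschitz operations (coordinatewise max, translation, convex combination with weights $r_{ij},r_{ji}$) in the $\ell_\infty$ norm, hence $F$ is nonexpansive: $\lVert F(\bary) - F(\bary')\rVert_\infty \le \lVert \bary - \bary'\rVert_\infty$. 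The first step is therefore to prove this nonexpansiveness carefully, being attentive to the $(\cdot)_+$ and the fact that different edges' updates do not interact (the value $\gamma_i^{\reb}$ depends only on neighbors' \emph{current} values, not on other rebalanced values).

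The second step is to exploit nonexpansiveness of $F$ together with the damping to get a quantitative termination bound. Here I expect the argument to run as follows: a genuine UD solution $\bargamma^\star$ is a fixed point of $F$ (Definition~\ref{def:correct_div} says exactly $F(\bargamma^\star)=\bargamma^\star$), and such a $\bargamma^\star$ exists by Theorem~\ref{thm:UD_existence} whenever the input is valid. For a nonexpansive map $F$ on a normed space with a fixed point, the damped (Krasnoselskii--Mann) iterates $\bargamma^{t+1} = \damp F(\bargamma^t) + (1-\damp)\bargamma^t$ satisfy the classical estimate that the \emph{residuals} $\lVert F(\bargamma^t) - \bargamma^t\rVert$ decrease to $0$, with the rate $\min_{t<T}\lVert F(\bargamma^t)-\bargamma^t\rVert^2 \le \frac{\lVert \bargamma^0 - \bargamma^\star\rVert^2}{\damp(1-\damp)\,T}$ (up to constants). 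Since the termination condition on line 13 is precisely $\lVert F(\bargamma^t) - \bargamma^t\rVert_\infty = \lVert\bargamma^{\reb}-\bargamma^t\rVert_\infty \le \eps$, and since for a valid input with weights in $(0,1]$ one has $\lVert\bargamma^0 - \bargamma^\star\rVert_\infty \le 1$ (all allocations are nonnegative and bounded by the max incident weight, hence by $1$; one should verify $\gamma_i\le 1$ from stability/dual-optimality, e.g.\ via Lemma~\ref{lemma:stable_outcomes}), we get termination in $T \le \lceil 1/(c\,\damp(1-\damp)\eps^2)\rceil$ iterations for the appropriate constant $c$. Getting the constant to be exactly $\pi$ is the delicate point, and I suspect it comes from the sharp form of the Baillon--Bruck / Kohlenbach-type rate for Mann iterations of nonexpansive maps, where the optimal constant in the residual-decay bound is indeed $\pi$; I would cite or reproduce that sharp inequality rather than re-derive it.

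The main obstacle, then, is not the nonexpansiveness (routine, if slightly fiddly with the $(\cdot)_+$), nor the existence of the fixed point (given), but rather obtaining the \emph{sharp} $\pi$ constant in the convergence-rate estimate for the damped iteration in a general Banach space (here $\ell_\infty$, which is not Hilbert, so the easy Hilbert-space telescoping argument does not directly apply). I would structure the proof so that this sharp rate is isolated as its own lemma---"for any nonexpansive $F$ on a normed space with a fixed point at distance $\le D$, the damped iterates reach residual $\le \eps$ within $\lceil D^2/(\pi\damp(1-\damp)\eps^2)\rceil$ steps"---and prove it using the known combinatorial/analytic argument (tracking $\lVert\bargamma^{t+1}-\bargamma^t\rVert$ and $\lVert\bargamma^t-\bargamma^\star\rVert$ and invoking the $\pi$-sharp inequality $\sum_{k=0}^{n-1}\binom{n-1}{k}\damp^k(1-\damp)^{n-1-k}\cdot(\text{something}) \le \ldots$). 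Then the theorem reduces to: check $F$ nonexpansive (step one above), check the input validity forces a fixed point and $D\le 1$, and apply the lemma with the line-13 stopping rule identified with the residual. The correctness direction (returning an $\eps$-UD solution, not merely an $\eps$-correct-division outcome) is already handled by Lemma~\ref{lemma:ER_preserves_stability}, so I would not need to re-address stability here.
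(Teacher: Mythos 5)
Your overall strategy is the paper's: view the update as a damped Krasnoselskii--Mann iteration of a sup-norm nonexpansive operator, identify the line-13 stopping rule with the residual $\lVert \sT\bargamma^t-\bargamma^t\rVert_\infty$, and invoke the Baillon--Bruck $1/\sqrt{\pi\damp(1-\damp)t}$ asymptotic-regularity rate, with stability delegated to Lemma \ref{lemma:ER_preserves_stability}. However, there are two concrete problems with the plan as written. First, your operator $F$ is \emph{not} a self-map of ``the space of allocations supported on $M$'': on a non-stable $\bargamma\in\cA_M$ the surplus $\surp_{ij}(\bargamma)$ can be negative, so $F(\bargamma)$ can have negative coordinates, and $F$ restricted to the stable allocations does not map back into the stable set either (that is precisely why the damping $\damp\le 1/2$ is needed). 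The abstract rate theorem requires a nonexpansive self-map of a bounded convex set, so it cannot be invoked for $F$ as defined. The paper fixes this by composing the rebalancing map with a thresholding operator $\sT^{\thr}$ (also shown nonexpansive, being a coordinatewise max/min), obtaining a self-map $\sT=\sT^{\thr}\sT^{\reb}$ of $\cA_M$, and then checking that along the actual trajectory --- which stays stable by Lemma \ref{lemma:ER_preserves_stability} --- the thresholding is inactive, so the abstract iterates coincide with the algorithm's. Some device of this kind is needed; your proposal omits it.

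Second, routing the quantitative bound through an exact fixed point is both unnecessary and shakier than you suggest. The cited Baillon--Bruck theorem bounds the residual by $\mathrm{diam}(C)/\sqrt{\pi\damp(1-\damp)t}$ with \emph{no} fixed-point hypothesis, and since weights are at most $1$ we have $\cA_M\subseteq[0,1]^{|V|}$, hence diameter at most $1$ --- which is exactly Eq.~\eqref{eq:BB_rate} and all that is needed. By contrast, the sharp-$\pi$ rate in terms of the distance to a fixed point in a general normed space is not what that reference provides, so ``cite the sharp fixed-point version'' is not a safe fallback. Moreover, the fixed point you propose to take from Theorem \ref{thm:UD_existence} is a UD solution on \emph{some} maximum weight matching, not necessarily on the input matching $M$; that a UD solution exists on $M$ itself is essentially a consequence of this very machinery (cf.\ Remark \ref{rem:Ishikawa_convergence}), so leaning on it here is at best an extra step and at worst circular. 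Dropping the fixed point and using the diameter form directly, together with the thresholding device above and your (correct) nonexpansiveness computation --- note the coefficients are $(1-r_{ij})$ and $-r_{ij}$, whose absolute values sum to $1$ --- closes the argument and is exactly the paper's proof.
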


The proof is in Section \ref{sec:lemma_proofs}.

Using Lemmas \ref{lemma:ER_preserves_stability} and Lemmas \ref{lemma:ER_converges_fast}, we immediately obtain
our main result, Theorem \ref{thm:FPTAS}.

\begin{proof}[Proof of Theorem \ref{thm:FPTAS}]
We showed that step 1 can be completed in time $\poly(|V|)$.
If the instance has no UD solutions then the algorithm returns {\sc unstable}. Else
we obtain a stable outcome and proceed to step 2.

Step 2 is performed using {\sc Edge Rebalancing}. The input is the instance, the stable outcome obtained from step 1,
$\damp = 1/2$ (for example) and the target error value $\eps > 0$. Lemmas \ref{lemma:ER_preserves_stability} and \ref{lemma:ER_converges_fast}
show that {\sc Edge Rebalancing} terminates after at most $\lceil 1/(\pi \damp (1-\damp) \eps^2) \rceil$ iterations, returning a outcome that is
stable and satisfies $\eps$-correct division, i.e. an $\eps$-UD solution.
Moreover, each iteration requires $O(|E|)$ simple operations. Hence, step 2 is completed in  $O(|E|/\eps^2)$ simple operations.

The total number of operations required by the entire algorithm is thus $\poly(|V|,1/\eps)$.
\end{proof}

\subsection{A Fast Local Algorithm}
\label{subsec:local_algorithm}

Our algorithm {\sc Edge Rebalancing} for step 2  is local/distributed, with each matched edge in the
graph being updated according to the same, time invariant rule.
This rule is a simple function of the edge parameters (weight, split fraction), and the current earnings of nodes in the
1-hop neighborhood. Only the termination condition is centrally computed,
but even that \emph{can be replaced with fixed time} $T_* = \left \lceil 1/(\pi \damp (1-\damp) \eps^2 \right \rceil$
at which to terminate (cf. Section \ref{sec:lemma_proofs}, Remark \ref{rem:ER_fixed_termination_time_possible}). Note that
$T_*$ is \emph{independent of network size}. It is also worth mention  that since stability is preserved, no
player ever has incentive to change her partner. Thus, {\sc Edge Rebalancing} constitutes a plausible model for
behavior of market participants, after they have attained a stable outcome. Damping can be interpreted as inertia
to change in the status quo.

We now show that step 1 can also be accomplished by a fast local algorithm, when
the LP \eqref{prob:mwm_relaxation} has a unique optimum (this condition is generic, see
Remark \ref{rem:unique_opt_isgeneric}, Appendix \ref{app:fast_local}).

The local algorithm
we use is belief propagation for maximum weight matching \cite{Bayati,BayatiB,SMW07,JHu07}.
This is a message passing algorithm with iterative updates of a `message' vector $\baroff \in [0,W]^{2|E|}$.
There are two messages on each edge $(i,j)$, denoted by $\off{i}{j}$ and  $\off{j}{i}$, one in each direction.
The algorithm performs iterative updates according to
\begin{align}
\off{i}{j}^{t+1} = \left(w_{ij} - \max_{k \in \partial i \backslash j} \off{k}{i}^t \right)_+
\label{eq:BP_one_step_update}
\end{align}
The initialization is the all-zero message vector $\baroff^0 = \underline{0}$. We denote this algorithm by BP-MWM.

\begin{lemma}[\cite{Bayati,BayatiB,SMW07}]
\label{lemma:BP_bayati_etal}
Suppose LP \eqref{prob:mwm_relaxation} has a unique optimum. BP-MWM converges iff the
optimum is integral. Further, if LP \eqref{prob:mwm_relaxation}
has an integral optimum corresponding to matching $M^*$, then the messages converge in $\left \lceil\frac{2|V|W}{g}\right \rceil $
iterations to a fixed point $\baroff^*$ satisfying the following. For any $i \in V$, if $i$ is matched under $M^*$ to $j$, then
$\argmax_{k \in \partial i} \off{k}{i}^* = \{j\}$ and $\off{j}{i}^*>0$. If $i$ is unmatched under $M^*$ then $\off{k}{i}=0$ for every $k \in \partial i$.
\end{lemma}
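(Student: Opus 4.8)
The plan is to follow the computation-tree analysis of belief propagation for maximum-weight matching developed in \cite{Bayati,BayatiB,SMW07}; I sketch the main steps. For $i \in V$ and $t \in \naturals$, let $T_i^t$ be the depth-$t$ \emph{computation tree} rooted at $i$: the tree of non-backtracking walks in $G$ of length at most $t$ starting at $i$, each tree-edge inheriting the weight of the corresponding edge of $G$. Unrolling the recursion \eqref{eq:BP_one_step_update} from $\baroff^0 = \u0$ shows that $\off{k}{i}^t$ equals the \emph{gain} of the edge from the copy of $k$ to the root in $T_i^t$, i.e. the weight of a maximum-weight matching of $T_i^t$ that uses that edge minus the weight of one that does not (truncated at $0$). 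Consequently $\argmax_{k \in \di} \off{k}{i}^t$, together with the sign of the maximum, records to whom (if anyone) the root of $T_i^t$ is matched in the maximum-weight matching of the tree, and more generally the message vector after $t$ steps is a function of the optimal tree matchings.

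The heart of the argument is to show that, when LP \eqref{prob:mwm_relaxation} has a \emph{unique} and \emph{integral} optimum with corresponding matching $M^*$, then for every $t \geq \lceil 2|V|W/g \rceil$ the maximum-weight matching of $T_i^t$ agrees, in a neighborhood of the root, with the lift $\widehat{M^*}$ of $M^*$ to $T_i^t$ (here $g$ is the gap appearing in the statement, i.e. the margin by which the optimum is unique). Suppose not, and let $\Lambda$ be an optimal tree matching disagreeing with $\widehat{M^*}$ at the root. Since $T_i^t$ is a tree, the symmetric difference $\Lambda \triangle \widehat{M^*}$ is a disjoint union of paths; let $P$ be the component through the root, which is nontrivial by assumption. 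Projecting $P$ down to $G$ gives an alternating walk; by optimality of $\Lambda$ on the truncated tree and of $M^*$ in $G$, switching along $P$ changes weights by a nonnegative-then-nonpositive amount, and decomposing the projected walk into alternating sub-paths and cycles yields either a matching of $G$ different from $M^*$ with weight $\geq w(M^*)$ (contradicting uniqueness) or forces $P$ to be long — in which case the length of $P$ is at least of order $t$, while each unit of extra length can reduce the uniqueness margin by at most $g/(2|V|)$, so $t < 2|V|W/g$, again a contradiction. The same exchange argument, applied when the root $i$ is unmatched in $M^*$, shows that every optimal tree matching then leaves the root unmatched.

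Given this agreement near the root, for all $t \geq \lceil 2|V|W/g \rceil$ the optimal matchings of the $T_i^t$ are frozen near their roots, hence the messages $\off{k}{i}^t$ stabilize; the limit $\baroff^*$ is a fixed point of \eqref{eq:BP_one_step_update}. If $i$ is matched to $j$ in $M^*$, the root of $T_i^t$ is matched to the copy of $j$ and to no other neighbor, so $\argmax_{k \in \di}\off{k}{i}^* = \{j\}$; moreover $\off{j}{i}^* = (w_{ij} - \max_{l \in \partial j \backslash i}\off{l}{j}^*)_+ > 0$, since $\argmax_{k \in \partial j}\off{k}{j}^* = \{j\text{'s partner }i\}$ gives $\max_{l \in \partial j \backslash i}\off{l}{j}^* < \off{i}{j}^* \le w_{ij}$. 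If $i$ is unmatched in $M^*$, the root of $T_i^t$ is unmatched in every optimal tree matching, which forces $\max_{k \in \di}\off{k}{i}^* = 0$ and hence $\off{k}{i}^* = 0$ for all $k \in \di$. Finally, for the ``only if'' direction: if the LP optimum is fractional it is half-integral with fractional mass supported on vertex-disjoint odd cycles, and along such a cycle the optimal tree matchings flip with the parity of $t$, so the messages cannot converge; this too is contained in \cite{Bayati,BayatiB,SMW07}.

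The step I expect to be the main obstacle is the second one — making the alternating-path exchange precise, in particular handling the case where $P$ reaches the boundary of the truncated tree, and extracting exactly the stated dependence $\lceil 2|V|W/g \rceil$ on the network size $|V|$, the weight bound $W$, and the uniqueness margin $g$.
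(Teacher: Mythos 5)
This lemma is not proved in the paper at all: it is imported verbatim from the cited works \cite{Bayati,BayatiB,SMW07}, so there is no in-paper argument to compare against. Your sketch follows the same computation-tree/unrolling analysis that those references use, and its overall architecture (messages as root-edge gains of optimal tree matchings, freezing of the tree optimum near the root after roughly $2|V|W/g$ levels, oscillation along odd cycles when the unique LP optimum is fractional) is the right one.

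One step of your sketch, however, has a genuine gap for general (non-bipartite) graphs, and it is exactly the step you flagged as the likely obstacle. When you project the alternating path $P \subseteq \Lambda \triangle \widehat{M^*}$ from the computation tree down to $G$, the projected walk can wrap around odd cycles; decomposing it then does not in general yield ``a matching of $G$ different from $M^*$'' but rather a \emph{half-integral} feasible point of LP \eqref{prob:mwm_relaxation} (fractional mass $1/2$ on odd cycles). Comparing only against integral matchings, as your sentence does, is therefore insufficient: the contradiction must be derived against the second-best \emph{corner of the matching polytope}, which is precisely why the lemma's hypothesis is a unique optimum of the LP relaxation (not merely a unique maximum weight matching) and why $g$ in the statement is the LP gap between polytope corners, as the paper defines it just after the lemma. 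With that correction the exchange argument goes through and delivers the $\lceil 2|V|W/g\rceil$ bound as in \cite{BayatiB,SMW07}; your handling of the matched/unmatched fixed-point properties and of the non-convergence direction is otherwise consistent with those references.
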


Note how the condition for convergence of BP-MWM is the same
as that for the existence of UD solutions! Here $g$ is difference in the weights of the heaviest and next heaviest corner of the matching polytope (given by the constraints of
LP \eqref{prob:mwm_relaxation}). We call it the \emph{LP gap}.

In the case that LP \eqref{prob:mwm_relaxation} has a unique optimum,
BP-MWM thus answers ``Does the LP \eqref{prob:mwm_relaxation} have an integral optimum?" If yes, it
also finds the maximum weight matching.

In fact, one also can find an optimum solution to the dual LP \eqref{prob:mwm_dual}
from $\baroff^*$, when BP-MWM converges (see also \cite[Appendix F.1]{OurNewArxiv}).
Consider any $i \in V$. Sort the messages $(\off{k}{i}^* , k \in \partial i)$ is non-increasing order. Denote the value of the
first item in the sorted list by $\mu_i(1)$ and the next value by $\mu_i(2)$.
Define $y_i^* \equiv (\mu_i(1)+\mu_i(2))/2$. The following
is proved in Appendix \ref{app:fast_local}.

\begin{propo}
\label{propo:BP_gives_dual_opt}
The construction above produces $\bary^*$, an optimal solution to the dual LP \eqref{prob:mwm_dual}.
\end{propo}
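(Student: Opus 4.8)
The plan is to show that $\bary^*$ defined by $y_i^* = (\mu_i(1)+\mu_i(2))/2$ is both feasible for the dual LP \eqref{prob:mwm_dual} and optimal, by comparing it with a known optimal dual solution. The natural candidate to compare against is the allocation $\bargamma$ associated with the stable outcome: by Lemma \ref{lemma:stable_outcomes}, any stable outcome's allocation is an optimal dual solution. More directly, I would use the fact that BP-MWM converges to the fixed point $\baroff^*$ described in Lemma \ref{lemma:BP_bayati_etal}, and that the quantities $\mu_i(1),\mu_i(2)$ have a clean interpretation: $\mu_i(1) = \max_{k\in\partial i}\off{k}{i}^*$ and $\mu_i(2) = \max_{k\in\partial i\setminus k^*}\off{k}{i}^*$ where $k^*$ achieves the max (the matched partner of $i$ if $i$ is matched). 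First I would record the key fixed-point identity: taking $t\to\infty$ in \eqref{eq:BP_one_step_update} gives $\off{i}{j}^* = (w_{ij} - \max_{k\in\partial i\setminus j}\off{k}{i}^*)_+$ for every directed edge.

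For feasibility I need $y_i^*+y_j^* \ge w_{ij}$ for all $(i,j)\in E$ and $y_i^*\ge 0$. Nonnegativity is immediate since messages are nonnegative. For the edge inequality, fix $(i,j)\in E$. From the fixed-point identity, $\off{i}{j}^* \ge w_{ij} - \max_{k\in\partial i\setminus j}\off{k}{i}^*$, i.e. $w_{ij} \le \off{i}{j}^* + \max_{k\in\partial i\setminus j}\off{k}{i}^*$. Now $\max_{k\in\partial i\setminus j}\off{k}{i}^* \le \mu_i(1)$ always, and in fact $\le \mu_i(2)$ unless $j$ is NOT the (unique) argmax at $i$; one checks case by case that $\off{i}{j}^* + \max_{k\in\partial i\setminus j}\off{k}{i}^* \le y_i^* + y_j^*$ using $\off{i}{j}^* \le \mu_j(1)$ and combining the "first vs second largest" bounds at both endpoints. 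The cases to handle are: $(i,j)\in M^*$; $(i,j)\notin M^*$ with one endpoint matched; $(i,j)\notin M^*$ with neither endpoint matched (then all relevant messages are $0$ by Lemma \ref{lemma:BP_bayati_etal}, and $w_{ij} \le \off{i}{j}^* + \off{j}{i}^* = 0$ forces... actually here one must instead use that $\off{i}{j}^*$ alone bounds $w_{ij}$ from the identity). This case analysis, reconciling the $(\cdot)_+$ truncation with the "average of top two messages" formula, is where I expect the main friction.

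For optimality I would invoke complementary slackness (or weak-duality-with-matching-objective) against the maximum weight matching $M^*$. It suffices to show $\sum_i y_i^* = \sum_{(i,j)\in M^*} w_{ij}$, i.e. that $\bary^*$ has the same value as the integral primal optimum. For each matched edge $(i,j)\in M^*$: by Lemma \ref{lemma:BP_bayati_etal}, at node $i$ the unique argmax is $\{j\}$, so $\mu_i(1) = \off{j}{i}^*$ and $\mu_i(2) = \max_{k\in\partial i\setminus j}\off{k}{i}^*$; similarly $\mu_j(1) = \off{i}{j}^*$. The fixed-point identity gives $\off{j}{i}^* = (w_{ij} - \max_{l\in\partial j\setminus i}\off{l}{j}^*)_+ = (w_{ij} - \mu_j(2))_+$ and symmetrically $\off{i}{j}^* = (w_{ij}-\mu_i(2))_+$; since these messages are positive the truncation is inactive, so $\off{j}{i}^* = w_{ij}-\mu_j(2)$ and $\off{i}{j}^* = w_{ij}-\mu_i(2)$. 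Then
$$y_i^* + y_j^* = \tfrac{1}{2}(\mu_i(1)+\mu_i(2)) + \tfrac12(\mu_j(1)+\mu_j(2)) = \tfrac12(\off{j}{i}^* + \mu_i(2) + \off{i}{j}^* + \mu_j(2)) = \tfrac12\big((w_{ij}-\mu_j(2)) + \mu_i(2) + (w_{ij}-\mu_i(2)) + \mu_j(2)\big) = w_{ij}.$$
For each unmatched node $i$, all incoming messages vanish, so $y_i^* = 0$. Summing over matched edges and unmatched nodes gives $\sum_i y_i^* = \sum_{(i,j)\in M^*}w_{ij} = \mathrm{OPT}$ of \eqref{prob:mwm_relaxation}, which by weak LP duality and feasibility of $\bary^*$ forces $\bary^*$ to be an optimal dual solution. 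Combining feasibility and this value computation completes the proof; the only delicate part remains the exhaustive but routine feasibility case-check sketched above.
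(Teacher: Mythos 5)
Your overall route is the paper's route: use the fixed-point identity of Eq.~\eqref{eq:BP_one_step_update} together with the argmax structure from Lemma \ref{lemma:BP_bayati_etal} to get $y_i^*+y_j^*=w_{ij}$ on matched edges and $y_i^*=0$ at unmatched nodes, hence $\sum_i y_i^*$ equals the weight of $M^*$, and then conclude optimality from feasibility plus LP duality. Your matched-edge computation is exactly the paper's (its Eqs.~\eqref{eq:mu_i1_j2}--\eqref{eq:mu_i2_j1}), and the duality conclusion is the same.

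The gap is in the feasibility half, which you defer to an acknowledged case-check, and the specific bound you write down, $\off{i}{j}^*\le\mu_j(1)$, is too weak to carry it: combined with $\max_{k\in\partial i\setminus j}\off{k}{i}^*\le\mu_i(1)$ it only yields $w_{ij}\le\mu_i(1)+\mu_j(1)$, which can strictly exceed $y_i^*+y_j^*=\tfrac12\bigl(\mu_i(1)+\mu_i(2)+\mu_j(1)+\mu_j(2)\bigr)$ whenever the second-largest messages are strictly smaller than the largest. The fact that closes the argument, and is what the paper uses, is that for $(i,j)\notin M^*$ one has $\off{i}{j}^*\le\mu_j(2)$: by Lemma \ref{lemma:BP_bayati_etal} the unique argmax among messages into $j$ is $j$'s matched partner, which is not $i$, and if $j$ is unmatched all incoming messages vanish. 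Likewise $\max_{k\in\partial i\setminus j}\off{k}{i}^*=\mu_i(1)$, since the argmax at $i$ is not $j$ (or $i$ is unmatched and all messages are zero). Plugging these into $\off{i}{j}^*=\bigl(w_{ij}-\max_{k\in\partial i\setminus j}\off{k}{i}^*\bigr)_+$ gives $w_{ij}\le\mu_i(1)+\mu_j(2)$, and by symmetry $w_{ij}\le\mu_j(1)+\mu_i(2)$; averaging the two gives $w_{ij}\le y_i^*+y_j^*$ with no further case analysis and no friction from the $(\cdot)_+$ truncation. Your worry about the ``neither endpoint matched'' case dissolves: since weights are strictly positive, an edge with both endpoints exposed would let us enlarge $M^*$, contradicting that $M^*$ is a maximum weight matching, so that case never arises.
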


Thus, we obtain a stable outcome $(\bary^*,M^*)$ from the BP fixed point $\baroff^*$ (see Lemma \ref{lemma:stable_outcomes}).

We mention here that BP-MWM can be interpreted as a bargaining process \cite[Appx A]{Azar}.

\begin{remark}
The performance of BP-MWM seems to be lacking in two respects. First, it fails when the LP \eqref{prob:mwm_relaxation}
has an integral optimum that is not unique. Second, the bound on convergence time depends inversely on LP gap $g$ which may be
arbitrarily small (in fact the bound is tight in worst case). We make three comments on this issue:
\begin{enumerate}[(i)]
\item These `flaws' appear to be inevitable. We are not aware of any local algorithm for maximum
weight matching that overcomes them.
\item For any instance on a \emph{bipartite} graph, the LP gap $g$ is larger than inverse polynomial in $|V|$ with
probability close to 1 under small random perturbations  \cite[Lemma 1]{OurNewArxiv}.
Thus, BP-MWM is likely to converge in time $\poly(|V|)$ on bipartite graphs as per this `smoothed analysis'.
\item Appendix \ref{app:dualopt_from_Mstar}
shows that if we are given a maximum weight matching
$M^*$ for an instance possessing a UD solution, then we can locally construct a stable outcome in $\poly(|V|)$ operations.
Using this, we obtain a \emph{local polynomial time reduction} from the problem of finding an $\eps$-UD solution to the sub-problem
of finding a maximum weight matching.
\end{enumerate}
\end{remark}

\section{Proofs of Lemmas \ref{lemma:ER_preserves_stability} and \ref{lemma:ER_converges_fast}}
\label{sec:lemma_proofs}

\begin{proof}[Proof of Lemma \ref{lemma:ER_preserves_stability}]
We prove this lemma by induction on time $t$. Clearly $(\gamma^0, M)$ is a stable outcome, since the input is valid. Suppose $(\gamma^t, M)$ is a stable outcome.

Consider any $(i,j) \in M$. It is easy to verify that $\gamma_i^\reb + \gamma_j^\reb = w_{ij}$, for
$\bargamma^{\reb}$ computed from $\bargamma^t$ in
Lines 8-11 of {\sc Edge Rebalancing}. Also, we know that $\gamma_i^t + \gamma_j^t = w_{ij}$.
It follows that $\gamma_i^{t+1} + \gamma_j^{t+1} = w_{ij}$ as needed. For $i \in V$ unmatched
under $M$, $\gamma_i^t=0$ by hypothesis and $\gamma_i^\reb=0 \, \Rightarrow \, \gamma_i^{t+1}=0$ as needed.

Consider any $(i,k) \in E \backslash M$. We know that $\gamma_i^t + \gamma_k^t \geq w_{ik}$. We want to show the corresponding inequality
at time $t+1$. Define $\sigma_{ik}^t \equiv \gamma_i^t + \gamma_k^t - w_{ik} \geq 0$.

{\bf Claim:} $\gamma_i^{\reb} \geq \gamma_i^t - \sigma_{ik}^t$

If we prove the claim, it follows that a similar inequality holds for $\gamma_k^{\reb}$, and hence
$\gamma_i^{\reb}+\gamma_k^{\reb} \geq\gamma_i^t + \gamma_k^t - 2\sigma_{ik}= w_{ik}- \sigma_{ik}^t$.
It then follows from the definition in Line 14
that $\gamma_i^{t+1} + \gamma_k^{t+1} \geq w_{ik}$,
for any $\damp \in (0,1/2]$. This will complete our proof that $(\bargamma^{t+1}, M)$ is a stable outcome.

Let us now prove the claim. Suppose $i$ is matched under $M$.
Using the definition in Line 7 (Line 8 contains a symmetrical definition), 
$\gamma_i^\reb \geq \max_{k' \in \partial i \backslash j} (w_{ik'} - \gamma_{k'}^t)_+$ since $\surp_{ik}(\bargamma^t) \geq 0$. Hence,
\begin{align*}
\gamma_i^\reb \geq (w_{ik} - \gamma_k^t)_+ \geq (w_{ik} - \gamma_k^t) = \gamma_i^t - \sigma_{ik}^t
\end{align*}
as needed. If $i$ is not matched under $M$, then $\gamma_i^t=\gamma_i^\reb=0$, so the claim follows from $\sigma_{ik}^t \geq 0$.
\end{proof}

\begin{proof}[Proof of Lemma \ref{lemma:ER_converges_fast}]
This result is proved using the powerful technique introduced in our recent work \cite{OurNewArxiv}.
We show that the iterative updates of {\sc Edge Rebalancing} can be written as
\begin{align}
\bargamma^{t+1} = \damp \sT\bargamma^t + (1-\damp)\bargamma^t
\label{eq:mann_iterations}
\end{align}
where $\sT$ is a non-expansive self mapping of a bounded convex subset of a normed linear space.

The linear space we consider is simply $\reals_+^{|V|}$. Let $\cA_M \subseteq [0,W]^{|V|}$ be the
set of allocations corresponding to matching $M$. It is easy to see that $\cA_M$
is a bounded convex set. We define $\sT: \cA_M \rightarrow \cA_M$
as the product of two operators, `rebalancing' operator $\sT^{\reb}: \cA_M \rightarrow \cA_M^{\ext}$
and a `thresholding' operator $\sT^{\thr}: \cA_M^{\ext} \rightarrow \cA_M$. Here $\cA_M^{\ext} \supseteq \cA_M$ is
set of allocations corresponding to matching $M$, with the non-negativity constraint relaxed. We define
$\sT^{\reb}$ as follows. For each $i \in V$ that is unmatched under $M$, $(\sT^\reb\bargamma)_i\equiv 0$.
For each $(i,j) \in M$,
\begin{align}
(\sT^{\reb} \bargamma)_i &\equiv \max_{k \in \partial i \backslash j} (w_{ik} - \gamma_k)_+ + r_{ij} \surp_{ij}(\bargamma)  \label{eq:Treb_defined} \\
(\sT^{\reb} \bargamma)_j &\equiv w_{ij} - (\sT^{\reb} \bargamma)_i = \max_{l \in \partial j \backslash i}\, (w_{jl} - \gamma_l)_+ + r_{ji} \surp_{ij}(\bargamma)
\end{align}
Note that $\bargamma^\reb$ as defined in Lines 6-12 of {\sc Edge Rebalancing} is exactly $\sT^{\reb} \bargamma^t$.
Also note that $\sT^{\reb} \bargamma \in \cA_M^{\ext}$ as required.

We define $\sT^{\thr}$ as follows. For each $i \in V$ that is unmatched under $M$, $(\sT^\thr \bargamma)_i\equiv 0$.
For each $(i,j) \in M$, there are three cases.\\
$\gamma_i <0$: $(\sT^{\thr} \bargamma)_i \equiv 0, (\sT^{\thr} \bargamma)_j \equiv w_{ij}$\\[2pt]
$\gamma_j <0$: $(\sT^{\thr} \bargamma)_i \equiv w_{ij}, (\sT^{\thr} \bargamma)_j \equiv 0$\\[2pt]
$\gamma_i \geq 0, \gamma_j \geq 0$: $(\sT^{\thr} \bargamma)_i \equiv \gamma_i, (\sT^{\thr} \bargamma)_j \equiv \gamma_j$
\vskip3pt

Note that $\gamma_i < 0$ and $\gamma_j < 0$ cannot occur simultaneously since $\gamma_i + \gamma_j = w_{ij}$.
Also note that the result of operating with $\sT^{\thr}$ is in $\cA_M$.

Consider the composite operator $\sT \equiv \sT^{\thr} \sT^{\reb}$. If $\sT$ operates on a stable outcome, the output of operator
$\sT^{\reb}$ is a non-negative allocation (since $\surp_{ij} \geq 0$ for every $(i,j) \in M$) with earnings of unmatched nodes being $0$,
 and $\sT^{\thr}$ acts simply as an
identity operator. It follows (using Lemma \ref{lemma:ER_preserves_stability}) that Lines 6-12 define $\bargamma^{\reb}= \sT \bargamma^t$.
Thus, we have verified that the iterative updates of {\sc Edge Rebalancing} (Line 14) correspond to Eq.~\eqref{eq:mann_iterations}.

Next, we show that $\sT$ is non expansive in sup norm, i.e. for any $\bargamma^a, \bargamma^b \in \cA_M$,
\begin{align}
\label{eq:T_non_expansive}
\lVert \sT \bargamma^a - \sT \bargamma^b \rVert _\infty \leq \lVert \bargamma^a - \bargamma^b \rVert _\infty
\end{align}
We prove this by showing that each of $\sT^{\reb}$ and $\sT^{\thr}$ is non-expansive in sup norm.

Consider $\sT^{\reb}$. Take any $(i,j) \in M$. Rewriting Eq.~(\ref{eq:Treb_defined})
using Eq.~\eqref{eq:surplus_defn},
we have
\begin{align*}
(\sT^{\reb} \bargamma)_i = r_{ij}w_{ij}+ (1-r_{ij}) \max_{k \in \partial i \backslash j} (w_{ik} - \gamma_k)_+
    - r_{ij} \max_{l \in \partial j \backslash i}\, (w_{jl} - \gamma_l)_+
\end{align*}
Now $x \mapsto (w-x)_+$ is non-expansive, and the `max' operator is non-expansive. Hence, using the triangle inequality
we obtain
\begin{align*}
|(\sT^{\reb} \bargamma^a)_i - (\sT^{\reb} \bargamma^b)_i| \leq \lVert \bargamma^a - \bargamma^b \rVert _\infty
\end{align*}
and similarly for $j$. For each $k$ that is unmatched under $M$, $(\sT^{\reb} \bargamma)_k =0$.
It follows that $\sT^{\reb}$ is non-expansive in sup norm.

Next consider $\sT^{\thr}$. For each $k$ that is unmatched under $M$, $(\sT^{\thr} \bargamma)_k =0$.
For any $(i,j) \in M$, we can write $(\sT^{\thr} \bargamma)_i = \max(\min(\gamma_i, w_{ij}), 0)$.
Since the `max'  and `min' operators are non-expansive, it follows that $\sT^{\thr}$ is non-expansive in sup norm.

Thus, we have shown that $\sT$ is a non-expansive self mapping of a bounded convex set of diameter 1 (since $W=1$).
Also, $\bargamma^t$ is obtained via iterative updates as per Eq.~\eqref{eq:mann_iterations}.
The main theorem in \cite{rate} tells us that
\begin{align}
\lVert \sT \bargamma^t - \bargamma^t \rVert _\infty \leq \frac{1}{\sqrt{\pi \damp (1-\damp) t}}
\label{eq:BB_rate}
\end{align}
Eq.~\eqref{eq:termination_time_bound} follows. Also,  $\lVert \sT^\reb\bargamma^{T} -\bargamma^T\rVert_\infty =
\lVert\bargamma^{\reb} -\bargamma^T\rVert_\infty \leq \eps$
implies $\eps$-correct division for $(\bargamma^T,M)$.
\end{proof}

\begin{remark}
\label{rem:ER_fixed_termination_time_possible}
In light of Eq.~(\ref{eq:BB_rate}), we could have simply used a fixed termination time of
$T_* = \lceil 1/(\pi \damp (1-\damp) \eps^2) \rceil$,
instead of the termination condition in Line 13. Eq.~(\ref{eq:BB_rate}) guarantees that
$\bargamma^{T_*}$ satisfies the $\eps$-correct division condition.
\end{remark}

\begin{remark}
\label{rem:Ishikawa_convergence}
It we remove the termination condition on Line 13 of {\sc Edge Rebalancing} (and iterate forever), \cite[Corollary
1]{Ishikawa} tells us that we converge to some $\bargamma^*$ such that $\sT \bargamma^* = \bargamma^*$, i.e.
we reach an exact UD solution. (Note that Lemma \ref{lemma:ER_preserves_stability} gives stability of the iterates,
and stability of the limit point $\bargamma^*$ follows.)
As a corollary, we recover Theorem \ref{thm:UD_existence} on existence of UD solutions.
\end{remark}

\section{Stability is Critical}
\label{sec:stability_critical}

This section demonstrates that our approach of starting with a stable allocation, and ensuring that stability is preserved,
plays a critical
role in our construction of an FPTAS using iterative edge rebalancing.

Let $n \equiv |V|$. Appendix \ref{app:example_stability_critical} shows the following.
There is a sequence of instances $(I_n, n \geq 8)$, such that for
each instance in the sequence the following holds. (a) The instance admits a UD solution.
(b) There is an outcome $(\bargamma, M^*)$ on a maximum weight
matching $M^*$ such that:
\begin{enumerate}
\item The outcome satisfies $\eps$-correct division for $\eps =  2^{-cn}$.
\item (Stability violation) There is a `bad' edge $(i,j) \notin M^*$ such that $\gamma_i + \gamma_j \leq w_{ij} - 1$
\end{enumerate}
where $c>0$ is a constant. Split fractions are bounded within $[r, 1-r]$ for arbitrary desired
$r \in (0, 1/2)$ ($c$ depends on $r$). Also, the weights are uniformly bounded by a constant $W(r)$.

We now describe the implications of such a construction. 
Suppose we perform edge balancing on the
example outcome (as per Eq.~\eqref{eq:mann_iterations}, using operator $\sT$ defined there),
i.e. $\bargamma^0 \equiv \bargamma$.
We know that $\lVert \sT \bargamma^0 - \bargamma^0 \rVert_\infty \leq \eps$,
since $\bargamma^0$ satisfies $\eps$-correct
division. Define $\sT_\damp \equiv \damp \sT + (1-\damp)\sI$, where $I$ is the identity operator.
Eq.~\eqref{eq:mann_iterations} simply corresponds to iterating with $\sT_\damp$, i.e. $\bargamma^t = \sT_\damp^t \bargamma^0$.
Clearly, $\lVert \sT_\damp \bargamma^0 - \bargamma^0 \rVert_\infty \leq \eps$.
Also, it follows from non-expansivity of $\sT$ (as per Eq.~\eqref{eq:T_non_expansive}) that $\sT_\damp$ is non-expansive
in sup norm. As a consequence
$\lVert \sT_\damp \bargamma^t - \bargamma^t \rVert_\infty \leq \eps$ for all $t \geq 0$.
 Thus, successive iterates differ by at most $\eps$ in sup norm, meaning
 that no coordinate changes by more than $\eps$ per iteration. Suppose we want to reach a
 configuration that satisfies both $(1/2)$-stability ($\gamma_k + \gamma_l \geq w_{kl} - 1/2$ for each $(k,l) \in E$)
 and the $(1/2)$-correct division condition.
One of $\gamma_i$ and $\gamma_j$ must change by at least $1/4$ for
the `bad' edge $(i,j)$ to satisfy $(1/2)$-stability,
i.e. $\gamma_i + \gamma_j \geq w_{ij} - 1/2$. But this will take at least
$1/(4\eps)=2^{\Omega(n)}$ iterations!

Thus, \emph{it can take exponential time to reach an approximate UD solution if we do
not stay within the space of stable outcomes while rebalancing.}

\begin{remark}
Essentially the same construction and reasoning shows that the dynamics
of  \cite{OurNewArxiv} can take exponential time to reach an $\eps$-UD solution.
\end{remark}

\paragraph{Further directions.}

It remains open whether there is a polynomial
algorithm that finds an exact UD solution.

Second, it would be interesting to identify other classes of games where solution concepts that are not symmetrical in the players
can be naturally defined and motivated. Various classes of cooperative games seem like particularly suitable candidates.

Third, though we have found a fast local algorithm  for finding $\eps$-UD solutions, it does not constitute a natural
description of market behavior of the type proposed in \cite{OurNewArxiv}. However, as discussed in Section \ref{sec:stability_critical},
there are instances where that dynamics does \textit{not} quickly reach a solution in the unsymmetrical case.
So it is unclear how to resolve this question.
\vspace{0.05cm}

\newpage

{\bf\large Acknowledgements.}
The author would like
to thank Andrea Montanari, Mohsen Bayati, R. Ravi and Mohammad Hossein Bateni for helpful discussions.

\bibliographystyle{amsalpha}

\begin{thebibliography}{99}



\bibitem[AS]{Aspvall} B.~Aspvall and Y.~Shiloach, ``A
    polynomial time algorithm for solving systems of linear
    inequalities with two variables pre inequality," 20th IEEE
    Symposium Foundations of Computer Science, Puerto Rico,
    October 1979.


\bibitem[AB+09]{Azar} Y.~Azar, B.~Birnbaum, L.~Elisa Celis,
    N.~R.~Devanur and Y.~Peres, ``Convergence of Local Dynamics
    to Balanced Outcomes in Exchange Networks," 50th IEEE
    Symposium Foundations of Computer Science, Atlanta,
    November 2009.

\bibitem[BB96]{rate} J.~Baillon and R.~E.~Bruck, ``The rate of
    asymptotic regularity is $O(1/\sqrt{n})$," in: A.G.
    Kartsatos (ed.),
    Theory and applications of nonlinear operators of accretive and
monotone type, Lecture Notes in Pure and Appl. Math. 178,
Marcel Dekker, Inc., New York, 1996, 51-81.

\bibitem[BH+10]{Bateni} M.~Bateni, M.~ Hajiaghayi, N.~Immorlica and
H.~Mahini, ``The cooperative game theory foundations
of network bargaining games," Intl. Colloquium on Automata, Languages and Programming, 2010.

\bibitem[BSS05]{Bayati} M.~Bayati, D.~Shah and M.~Sharma,
    ``Max-Product for Maximum Weight Matching: Convergence,
    Correctness, and LP Duality,'' IEEE Trans. Inform. Theory,
    {\bf 54} (2008) 1241-1251

\bibitem[BB+07]{BayatiB} M.~Bayati, C. Borgs, J. Chayes, R.
    Zecchina, ``On the exactness of the cavity method for
    Weighted b-Matchings on Arbitrary Graphs and its Relation
    to Linear Programs,'' arXiv:0807.3159, (2007)

\bibitem[B88]{auction} D.~P.~Bertsekas, ``The Auction Algorithm: A Distributed Relaxation
Method for the Assignment Problem," Annals of Operations Research,
Vol. 14, pp. 105-123, 1988.

\bibitem[CJ+10]{Kearns_exp} T.~Chakraborty, S.~Judd, M.~Kearns,
    J.~Tan, ``A Behavioral Study of Bargaining in Social
    Networks," to appear in Proc. EC, 2010.


\bibitem[CKK09]{Kearns1} T.~Chakraborty, M.~Kearns and
    S.~Khanna, ``Network Bargaining: Algorithms and Structural
    Results'' 10th ACM Conference on Electronic Commerce,
    Stanford, July 2009.




\bibitem[CY92]{CookY} K.~S.~Cook and T. Yamagishi, ``Power
    exchange in networks: A power-dependence formulation,''
    Social Networks, {\bf 14} (1992) 245-265


\bibitem[DP09]{approx_NE} C.~Daskalakis, C.~Papadimitriou, ``On
    oblivious PTAS's for nash equilibrium," Proc. STOC 2009.


\bibitem[EOB78]{Edelstein} M.~Edelstein and R.C.~O'Brien,
    ``Nonexpansive mappings, asymptotic regularity, and
    successive approximations," J.~London Math. Soc. 1 (1978),
    547-554

\bibitem[FKK01]{Faigle01} U.~Faigle, W.~Kern, and J.~Kuipers, ``On the computation of the nucleolus of a cooperative game,"
International Journal of Game Theory, 30 (2001), pp. 79-98.

\bibitem[GS62]{GaleShapley} D.~Gale and L.~S.~Shapley, ``College admissions and the stability of marriage," Amer.
Math. Monthly 69 (1962), 9-15.

\bibitem[GT91]{PolyMWM} H.~N.~Gabow and R.~E.~Tarjan, ``Faster scaling algorithms for general graph-matching problems,"
J. ACM, 38(4):815–853, 1991.

\bibitem[I76]{Ishikawa} S.~Ishikawa, ``Fixed points and
    iteration of a nonexpansive mapping in a Banach space,"
    Proc. American Mathematical Society, Vol. 59, No. 1, 1976.



\bibitem[HuJ07]{JHu07}
B. Huang, T. Jebara, ``Loopy belief propagation for bipartite maximum weight b-matching," Artificial
Intelligence and Statistics (AISTATS), March, 2007.


\bibitem[KB+10]{OurNewArxiv} Y.~Kanoria, M.~Bayati, C.~Borgs,
    J.~Chayes, and A.~Montanari, ``Fast Convergence of Natural Bargaining
    Dynamics in Exchange Networks,'' {\sf arXiv:1004.2079}
    (April 2010).

\bibitem[KT08]{KT} J.~Kleinberg and E.~Tardos, ``Balanced
    outcomes in social exchange networks,'' in {\em STOC},
    2008.


\bibitem[K01]{time_varying_damping} U.~Kohlenback, ``A
    Quantitative Version Of A Theorem Due To
    Borwein-Reich-Shafrir,"
    Numerical Functional Analysis and Optimization, Vol. 22, Issue 5-6,
    August 2001.

\bibitem[LY+01]{Lucas} J.W. Lucas, C.W.~Younts, M.J.~Lovaglia,
    and B.~Markovsky, ``Lines of power in exchange networks,''
    Social Forces, 80 (2001) 185-214



\bibitem[M53]{Mann} W.~R.~Mann, ``Mean value methods in
    iteration," Proc. Amer. Math Soc., 4 (1953), 506--510


\bibitem[Nas50]{Nash} J.~Nash, ``The bargaining problem,''
    Econometrica, {\bf 18} (1950) 155-162

\bibitem[Ro84]{Rochford} S.C.~Rochford, ``Symmetric
    pairwise-bargained allocations in an assignment market,''
    in J. Economic Theory, {\bf 34} (1984) 262-281.


\bibitem[R82]{sequential_bargaining} A.~Rubinstein: Perfect
    equilibrium in a bargaining model. Econometrica 50, 97-109
    (1982).


\bibitem[SMW07]{SMW07}
S. Sanghavi, D. Malioutov, A. Willsky, ``Linear Programming Analysis of
Loopy Belief Propagation for Weighted Matching,"  Neural Information Processing Systems (NIPS), 2007



\bibitem[Sot05]{Sotomayor} M.~Sotomayor, ``On The Core Of The One-Sided Assignment Game," 2005, \url{http://www.usp.br/feaecon/media/fck/File/one_sided_assignment_game.pdf}.

\bibitem[SW93]{Skvoretz} J.~Skvoretz and D.~Willer,
 ``Exclusion and power: A test of four theories of power
in exchange networks,'' American Sociological Review, 58 (1993)
801-818


\bibitem[Wil99]{NET} D.~Willer (ed.) \emph{Network Exchange
    Theory}, Praeger, 1999

\end{thebibliography}

\newpage

\appendix

\section{Appendix to Section \ref{subsec:local_algorithm}}
\label{app:fast_local}

\begin{remark}
\label{rem:unique_opt_isgeneric}
Fix a graph $G=(V,E)$ and maximum weight $W>0$.
We argue that the condition ``LP \eqref{prob:mwm_relaxation} has a unique optimum" is generic in each of two different cases:
\begin{enumerate}[(i)]
\item {\bf All instances:} Let ${\sf G}=(0,W]^{|E|}$ be the set of all instances.
Then the subset of instances with unique optimum is both
\emph{open and dense in ${\sf G}$}.
\item {\bf Instances with integral optimum:}
Let ${\sf G}_{\textup{I}} \subset (0,W]^{|E|}$ be the set of instances having an integral optimum.
Let ${\sf G}_{\textup{UI}} \subset {\sf G}_{\textup{I}}$ be the set of instances having a unique integral optimum.
It turns out that ${\sf G}_{\textup{I}}$ has dimension $|E|$ (i.e.
the class of instances having an integral optimum is large) and
that ${\sf G}_{\textup{UI}}$ is both
\emph{open and dense in ${\sf G}_{\textup{I}}$}.
\end{enumerate}
\end{remark}

\begin{proof}[Proof of Proposition \ref{propo:BP_gives_dual_opt}]
Take any edge $(i,j) \in M^*$. 
From Lemma \ref{lemma:BP_bayati_etal}, we know that $\off{j}{i}^* > 0$. It follows that
$\off{j}{i}^* = w_{ij} - \max_{l \in \partial j \backslash i} \off{l}{i}^*$. But $\off{j}{i}^* = \mu_i(1)$ and
$\max_{l \in \partial j \backslash i} \off{l}{i}^* = \mu_j(2)$ by Lemma \ref{lemma:BP_bayati_etal}. Thus we obtain
\begin{align}
\mu_i(1) = w_{ij} - \mu_j(2)
\label{eq:mu_i1_j2}
\end{align}
Similarly, we have
\begin{align}
\mu_j(1) = w_{ij} - \mu_i(2)
\label{eq:mu_i2_j1}
\end{align}
Combining Eq.~\eqref{eq:mu_i1_j2},\eqref{eq:mu_i2_j1}, we obtain $y_i^* + y_j^* = w_{ij}$ as required.

Take any edge $(i,j) \notin M^*$. 
From Lemma \ref{lemma:BP_bayati_etal},
we know that $\off{i}{j} \leq \mu_j(2)$. Also, $\max_{k \in \partial i \backslash j} \off{k}{i}= \mu_i(1)$.
It follows that
$(w_{ij}- \mu_i(1) )_+ \leq \mu_j(2) \Rightarrow w_{ij} \leq \mu_i(1) + \mu_j(2)$.
Similarly, we obtain
$w_{ij} \leq \mu_j(1) + \mu_i(2)$.
Combining, it follows that $y_i^* + y_j^* \geq w_{ij}$ as required.

Note also that for any $i\in V$ not matched under $M^*$, it follows from Lemma \ref{lemma:BP_bayati_etal} that
$y_i^*=0$. Thus, we have shown that $\bary^*$ is a feasible point for the dual LP \eqref{prob:mwm_dual}, which
also satisfies $\sum_{i \in V} y_i^* = \textup{weight of } M^*$,
i.e. it achieves the value of the primal LP \eqref{prob:mwm_relaxation}.
Hence, $\bary^*$ is a optimum solution to the dual LP \eqref{prob:mwm_dual}.
\end{proof}

\section{Local polynomial time `reduction' to maximum weight matching}
\label{app:dualopt_from_Mstar}

In this section we prove the following:
\begin{claim}
\label{claim:Mstar_to_eps_UD}
Given a maximum weight matching $M^*$ for an instance possessing a UD solution, an $\eps$-UD solution
can be constructed by a local algorithm with computational effort $\poly(|V|, 1/\eps)$.
\end{claim}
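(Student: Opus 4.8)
I would split the construction into two stages: first produce \emph{some} stable outcome $(\bargamma,M^*)$ on the given matching, then rebalance it to an $\eps$-UD solution. The second stage is essentially free: by Lemma~\ref{lemma:ER_preserves_stability} and Lemma~\ref{lemma:ER_converges_fast}, running {\sc Edge Rebalancing} on $(\bargamma,M^*)$ with damping $\damp=1/2$ and target error $\eps$ (after rescaling all weights so that $W=1$, which merely rescales $\eps$) terminates at an $\eps$-UD solution within $O(1/\eps^2)$ iterations, each requiring $O(|E|)$ local operations, and the update rule is local by construction. So the entire difficulty is concentrated in producing the initial stable outcome locally and in time $\poly(|V|)$ from $M^*$ alone.

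For the first stage, I would use Lemma~\ref{lemma:stable_outcomes} together with Theorem~\ref{thm:UD_existence}: since the instance admits a UD solution, the LP~\eqref{prob:mwm_relaxation} has an integral optimum, and \emph{any} optimal $\bary$ of the dual LP~\eqref{prob:mwm_dual} gives a stable outcome $(\bary,M^*)$ with our $M^*$ --- in particular every such $\bary$ has $y_k=0$ at each $M^*$-unmatched $k$, by complementary slackness. Thus it suffices to locally construct one dual optimum given $M^*$: that is, a nonnegative $\bargamma$ with $\gamma_i+\gamma_j=w_{ij}$ on each edge of $M^*$, $\gamma_k=0$ at $M^*$-unmatched nodes, and $\gamma_i+\gamma_j\ge w_{ij}$ on all edges.

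To find such $\bargamma$ I would orient each matched edge $e=(i,j)$ and use $\gamma_i+\gamma_j=w_{ij}$ to eliminate one variable, leaving a single free coordinate $z_e\in[0,w_e]$ per edge of $M^*$; every remaining stability inequality then constrains at most two of the $z_e$, so we have a feasibility problem for a system of linear inequalities with two variables per inequality, which is polynomial-time solvable \cite{Aspvall}. To get a \emph{local} algorithm I would run the corresponding label-correcting relaxation: start from $z_e=w_e/2$ and repeatedly, at each violated inequality, move the involved coordinate(s) by the least amount that repairs it --- a Bellman--Ford / difference-constraint style update confined to one-hop neighborhoods. The substantive claim to verify is polynomial termination: feasibility of the target polytope is guaranteed because a dual optimum exists, and because $M^*$ has maximum weight every $M^*$-alternating cycle is weight-tight, which caps the potentials the relaxation can accumulate at $\poly(|V|)\cdot W$ and rules out infinite looping.

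I expect this last point --- certifying that a genuinely local, $\poly(|V|)$-time procedure recovers a dual optimum on \emph{general} graphs --- to be the main obstacle. The stability constraints are of ``two variables per inequality'' type but include true sum-constraints, not only difference-constraints, and odd alternating structures (blossoms) must be accommodated; the quantitative crux is the tightness of $M^*$-alternating cycles, which keeps the relaxation's progress measure polynomially bounded. Everything downstream of obtaining a stable outcome is routine given Lemmas~\ref{lemma:ER_preserves_stability} and~\ref{lemma:ER_converges_fast}.
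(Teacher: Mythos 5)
Your second stage is exactly the paper's: once some stable outcome $(\bargamma,M^*)$ is in hand, {\sc Edge Rebalancing} with $\damp=1/2$ gives an $\eps$-UD solution in $O(|E|/\eps^2)$ local operations, by Lemmas~\ref{lemma:ER_preserves_stability} and~\ref{lemma:ER_converges_fast}. The gap is precisely where you locate the difficulty and then do not close it: the local, $\poly(|V|)$ construction of a dual optimum from $M^*$ alone. Your termination argument for the label-correcting relaxation rests on the assertion that ``every $M^*$-alternating cycle is weight-tight,'' which is false: maximality of $M^*$ only gives that the matched weight is at least the unmatched weight along any alternating cycle, with equality only when the swapped matching is also optimal. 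Without that (wrong) identity you have no stated progress measure, and a ``repair each violated inequality by the least amount'' scheme on a mixed system of sum- and difference-constraints has no obvious polynomial bound --- indeed the paper's Appendix~\ref{app:example_stability_critical} exhibits exactly the kind of geometric attenuation along long alternating paths that makes naive local repair take exponentially many small steps. Falling back on Aspvall--Shiloach \cite{Aspvall} does give polynomial time, but it is a centralized algorithm, so it does not deliver the locality required by the claim; and stability in the definition of an $\eps$-UD solution is exact, so you cannot tolerate an approximately feasible point either.

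The paper closes this gap differently: it runs max-product BP (the updates~\eqref{eq:undamped_BP}) with the special initialization~\eqref{eq:BP_special_initialization} that puts $\off{i}{j}^0=w_{ij}$ on edges of $M^*$ and $0$ elsewhere, proves via a partial order (Definition~\ref{def:BP_givenMstar_ordering}, Lemma~\ref{lemma:BP_A_monotonicity}) that the messages evolve monotonically toward a fixed point, and uses a critical-path argument to show exact convergence within $2|E|$ iterations (Claim~\ref{claim:BP_givenMstar_converges}); the fixed-point messages then yield a dual optimum, hence a stable outcome on $M^*$, by the $\mu_i(1),\mu_i(2)$ construction of Proposition~\ref{propo:BP_gives_dual_opt}. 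If you want to salvage your LP-based route you would need a genuinely local 2VPI solver with a proven polynomial bound on general (non-bipartite) graphs, including the odd-cycle structures you mention; as written, the central step of the claim is unproven.
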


Our definition of $\eps$-UD solutions retains a strict version of stability while relaxing the
balance requirement to $\eps$ balance (cf. Definition \ref{def:UD_soln}). We use max-product belief propagation
to find a stable allocation, \emph{given a maximum weight matching $M^*$}. This is achieved locally and in polynomial time.


Consider the standard \emph{undamped synchronous BP updates} given by:
\begin{align}
\off{i}{j}^t &= (w_{ij}- \alf{i}{j}^t)_+\nonumber\\[2pt]
\alf{i}{j}^{t+1} &=  \max_{k \in \partial i \backslash j} \off{k}{i}^t
\label{eq:undamped_BP}
\end{align}
This is equivalent to the update rule Eq.~\eqref{eq:BP_one_step_update}.

We use a carefully chosen initialization (different from the usual all-zero) to achieve our objective:
\begin{align}
\off{i}{j}^0 = \left \{\begin{array}{ll}
w_{ij} & \mbox{if }(ij) \in M^*\\
0 & \mbox{otherwise}
\end{array} \right .
\label{eq:BP_special_initialization}
\end{align}

Let the version of max-product BP message passing defined by \eqref{eq:undamped_BP} and \eqref{eq:BP_special_initialization} be denoted by $\mc{A}$.

Our key result on $\mc{A}$ is the following:
\begin{claim}
Algorithm $\mc{A}$ converges to an exact fixed point in $2|E|$ iterations.
\label{claim:BP_givenMstar_converges}
\end{claim}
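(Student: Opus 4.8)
The plan is to track the messages $\off{i}{j}^t$ along alternating paths in the graph and show that each message stabilizes after a number of steps proportional to its distance, in the alternating-path sense, from an unmatched node or from the matched edge it sits on. First I would set up notation: call an edge \emph{matched} if it lies in $M^*$ and \emph{unmatched} otherwise, and observe from the special initialization \eqref{eq:BP_special_initialization} that at $t=0$ the matched messages carry the full edge weight $w_{ij}$ while all unmatched messages are $0$. The key structural fact I would extract from Lemma \ref{lemma:BP_bayati_etal} (or rather re-derive directly, since here we already know $M^*$) is that at the true fixed point $\baroff^*$, for every matched edge $(ij)\in M^*$ the message $\off{i}{j}^*$ is the unique maximizer entering $j$ and is strictly positive, while unmatched messages into any node are dominated by the matched one. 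The initialization is designed precisely so that these dominance relations hold \emph{from the start} and are preserved by the update \eqref{eq:undamped_BP}.

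The main steps, in order, would be: (1) Show by induction on $t$ that for each $(ij)\in M^*$, the quantity $\alf{i}{j}^{t}=\max_{k\in\partial i\setminus j}\off{k}{i}^{t}$ never receives a contribution from the matched message $\off{j}{i}$ wrapping back around — i.e. the max into $i$ (excluding $j$) is always realized by an unmatched neighbor or by the other end of $i$'s matched edge if $i$'s match is not $j$; combined with $r_{ij}$-type reasoning this means no spurious feedback loop inflates the messages. (2) Using maximum-weight-matching optimality (no positive-weight alternating cycle, and every alternating path from an unmatched vertex has non-positive alternating weight), show that each unmatched message $\off{i}{j}^t$, once it drops to its fixed-point value, stays there, and that these values propagate outward one hop per iteration. (3) Since the graph has $|E|$ edges and any alternating walk that does not repeat an edge has length at most $2|E|$ (a matched and an unmatched edge alternate, and no edge repeats before a value stabilizes), conclude that all messages reach $\baroff^*$ by iteration $2|E|$. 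Step (2) is where the maximum-weight-matching structure does the real work: it is exactly the statement that the "influence" of the correct initialization cannot be undone, because any attempt to route a larger value back along an alternating path corresponds to an augmenting or cycle-improving structure that $M^*$ forbids.

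The hard part will be step (2): making precise the claim that the undamped synchronous BP dynamics, started from \eqref{eq:BP_special_initialization}, is \emph{monotone} in the right sense — that messages either are already correct or move monotonically toward the fixed point — and bounding the time for this monotone convergence by the length of the longest simple alternating walk. One clean way to organize this is to compare algorithm $\mc{A}$ to standard BP-MWM (all-zero start) on the same instance and argue that the special initialization only \emph{shortcuts} the convergence, never delays it, so the $2|E|$ bound follows from the alternating-walk diameter being at most $2|E|$; alternatively, one can define, for each directed edge, its "alternating depth" $d(i\to j)$ (shortest alternating walk from an unmatched vertex, or from $i$'s matched partner, ending with the step $i\to j$) and prove $\off{i}{j}^t = \off{i}{j}^*$ for all $t\ge d(i\to j)$ by induction on $d$, then note $\max d(i\to j)\le 2|E|$. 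I would carry it out via the alternating-depth induction, since it gives the explicit $2|E|$ bound directly and isolates the use of $M^*$-optimality to a single lemma: along any alternating walk the partial alternating sums that the BP updates compute are maximized by terminating at an unmatched vertex or at a matched edge, which is precisely the optimality of $M^*$.
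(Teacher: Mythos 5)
Your overall strategy---a monotone sandwich (unmatched messages increase toward the fixed point, matched ones decrease) plus propagation of fixed-point correctness one hop per iteration along alternating paths, with optimality of $M^*$ forbidding cycling---is exactly the paper's route (its Lemma \ref{lemma:BP_A_monotonicity} together with the {\sc Critical Path} construction). The genuine gap is in your choice of induction variable. You define $d(i\to j)$ by a \emph{shortest} alternating walk, but the induction step then fails at matched edges: for $(i,j)\in M^*$ the update computes $\alf{i}{j}^{t+1}=\max_{k\in\partial i\setminus j}\off{k}{i}^{t}$ over \emph{unmatched} incoming messages, each of which satisfies $\off{k}{i}^{t}\le \off{k}{i}^{*}$ by monotonicity, so $\off{i}{j}^{t+1}$ becomes correct only once the incoming message that \emph{achieves the maximum at the fixed point} has converged---and that maximizer need not lie on a shortest alternating walk from an unmatched vertex or from $i$'s partner. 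Correctness therefore propagates along the chain of fixed-point argmaxes (the paper's critical path), and the $2|E|$ bound must come from showing this chain hits a zero message before any directed edge repeats, not from a shortest-walk diameter bound.

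Relatedly, ``no positive-weight alternating cycle'' is not enough to rule out the chain closing on itself: a \emph{zero}-weight alternating cycle (a non-unique optimum of LP \eqref{prob:mwm_relaxation}, which the claim's hypotheses allow) is perfectly compatible with $M^*$ being a maximum weight matching, and going around it yields no contradiction. The paper's Case (ii) gets its strict inequality only under the assumption of a unique LP optimum, and treats ties by a separate argument using the minimal fixed point $\baroff^{*,\bot}$ with non-deterministic tie-breaking; your proposal does not address this case at all. Two smaller points: the split fractions $r_{ij}$ play no role in algorithm $\mc{A}$ (BP for matching never sees them), so the ``$r_{ij}$-type reasoning'' in your step (1) is out of place; and your fallback of comparing against all-zero BP-MWM cannot yield the $2|E|$ bound, since that algorithm's convergence time is $\lceil 2|V|W/g\rceil$ and degrades with the LP gap $g$, which can be arbitrarily small.
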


The fixed points of update rule $\eqref{eq:undamped_BP}$ can be characterized similarly to the fixed points of the
`natural dynamics' in our previous work \cite[Section 3]{OurNewArxiv}.
\begin{lemma}
\label{lemma:BP_fixed_point}
Consider an instance having an integral optimum to LP \eqref{prob:mwm_relaxation}, corresponding
to matching $M^*$.
The update rule \eqref{eq:undamped_BP} has at least  one fixed point.
Let $(\baralf^*,\baroff^*)$ be a fixed point. Then
\begin{align*}
w_{ij}- \alf{i}{j}^*-\alf{j}{i}^* \geq 0 \quad \forall \; (ij) \in M^*\\
w_{ij}- \alf{i}{j}^*-\alf{j}{i}^* \leq 0 \quad \forall \; (ij) \notin M^*
\end{align*}
Also, for every $(ij) \notin M^*$, we have $\alf{i}{j}^* = \off{k}{i}^*$, where $(i,k) \in M^*$.
\end{lemma}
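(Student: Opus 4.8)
The plan is to establish the three assertions --- existence of a fixed point, the sign conditions on $w_{ij} - \alf{i}{j}^* - \alf{j}{i}^*$ for matched and unmatched edges, and the identity $\alf{i}{j}^* = \off{k}{i}^*$ for $(i,k)\in M^*$ --- by mirroring the fixed-point analysis of the natural dynamics in \cite[Section 3]{OurNewArxiv}, adapted to the BP update rule \eqref{eq:undamped_BP}. For existence, I would observe that the combined update $\baralf \mapsto \baralf'$ obtained by substituting $\off{i}{j} = (w_{ij} - \alf{i}{j})_+$ into the $\alf{}{}$-update is a monotone (order-preserving) and bounded self-map of $[0,W]^{2|E|}$, so by Tarski's fixed-point theorem (or alternatively by running $\mc A$ from the initialization \eqref{eq:BP_special_initialization}, invoking Claim \ref{claim:BP_givenMstar_converges}) a fixed point $(\baralf^*,\baroff^*)$ exists. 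At a fixed point we have $\off{i}{j}^* = (w_{ij} - \alf{i}{j}^*)_+$ and $\alf{i}{j}^* = \max_{k\in\partial i\setminus j}\off{k}{i}^*$ simultaneously on every edge.

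For the identity on unmatched edges, fix $(ij)\notin M^*$ and let $(i,k)\in M^*$ be the match of $i$ (such $k$ exists whenever $i$ is matched; if $i$ is unmatched the claim degenerates and I would handle it separately using the characterization of BP fixed points at unmatched nodes from Lemma \ref{lemma:BP_bayati_etal}-style reasoning, namely that all incoming messages vanish). The point is that $\alf{i}{j}^* = \max_{k'\in\partial i\setminus j}\off{k'}{i}^*$, and I claim the maximizer is $k$. This should follow from the connection between BP fixed points and the dual LP optimum: the fixed-point messages encode the optimal dual variables $y_i^* = $ (half the sum of top two incoming messages, as in Proposition \ref{propo:BP_gives_dual_opt}), and complementary slackness forces the matched edge $(i,k)$ to be the unique "active" alternative at $i$ in the sense that $\off{k}{i}^* \geq \off{k'}{i}^*$ for all $k'\neq k$, and moreover $\off{k'}{i}^* \leq \off{k}{i}^*$ whenever $(i,k')\notin M^*$. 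Hence $\max_{k'\in\partial i\setminus j}\off{k'}{i}^* = \off{k}{i}^*$ since removing $j\neq k$ from the maximization does not remove the maximizer $k$. This gives $\alf{i}{j}^* = \off{k}{i}^*$.

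For the sign conditions: on an unmatched edge $(ij)\notin M^*$, using the identity just proved, $\alf{i}{j}^* + \alf{j}{i}^* = \off{k}{i}^* + \off{\ell}{j}^*$ where $(i,k),(j,\ell)\in M^*$; these are the top incoming messages at $i$ and $j$ respectively, which by the dual-feasibility argument underlying Proposition \ref{propo:BP_gives_dual_opt} satisfy $\off{k}{i}^* + \off{\ell}{j}^* \geq w_{ij}$, giving $w_{ij} - \alf{i}{j}^* - \alf{j}{i}^* \leq 0$. On a matched edge $(ij)\in M^*$, the reverse inequality should come from the fact that $\alf{i}{j}^*$ is the \emph{max over alternatives excluding} $j$, hence is at most the second-largest incoming message at $i$, which is $\mu_i(2)$ in the notation of Proposition \ref{propo:BP_gives_dual_opt}; combining $\alf{i}{j}^* \leq \mu_i(2)$ and $\alf{j}{i}^* \leq \mu_j(2)$ with \eqref{eq:mu_i1_j2}--\eqref{eq:mu_i2_j1} (which give $\mu_i(2) + \mu_j(2) \leq w_{ij}$ via $\mu_i(2) = w_{ij} - \mu_j(1) \leq w_{ij} - \mu_i(1)'$... more carefully, $\mu_i(1) = \off{j}{i}^* = w_{ij} - \mu_j(2)$ so $\mu_j(2) = w_{ij} - \mu_i(1) \leq w_{ij} - \mu_i(2)$ hence $\mu_i(2)+\mu_j(2)\leq w_{ij}$) yields $\alf{i}{j}^* + \alf{j}{i}^* \leq w_{ij}$, i.e. $w_{ij} - \alf{i}{j}^* - \alf{j}{i}^* \geq 0$.

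The main obstacle I anticipate is the careful bookkeeping establishing that at a fixed point the matched neighbor $k$ of $i$ really is the (unique) argmax of the incoming messages, and that $\off{k'}{i}^* \leq \off{k}{i}^*$ for unmatched alternatives $k'$ --- this is where the correspondence between BP fixed points and dual LP optima (Lemma \ref{lemma:BP_bayati_etal}, Proposition \ref{propo:BP_gives_dual_opt}) must be leveraged, and one must be cautious that the special initialization \eqref{eq:BP_special_initialization} of algorithm $\mc A$ converges to a fixed point with this property, rather than to some spurious one. The cleanest route is probably to prove the structural claims directly for \emph{any} fixed point of \eqref{eq:undamped_BP} on an instance with integral optimum, using only the fixed-point equations and the LP duality in Lemma \ref{lemma:stable_outcomes}, exactly paralleling \cite[Section 3]{OurNewArxiv}; the convergence of $\mc A$ (Claim \ref{claim:BP_givenMstar_converges}) is then only needed for the existence assertion and for Claim \ref{claim:Mstar_to_eps_UD}, not for the structural statements.
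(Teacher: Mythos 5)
The paper gives no self-contained argument for this lemma --- its ``proof'' is a pointer to the arguments of \cite{OurNewArxiv}, Appendix F.1 --- so your proposal must stand on its own, and as written it has two genuine gaps. First, existence. The one-step map $\alf{i}{j}\mapsto\max_{k\in\partial i\setminus j}(w_{ik}-\alf{k}{i})_+$ is order-\emph{reversing}, not order-preserving: increasing every input coordinate can only decrease every output coordinate. Hence Tarski's theorem does not apply as you invoke it (at best one gets fixed points of the squared map, i.e.\ period-two points, and undamped BP can indeed oscillate). Your fallback --- run $\mc{A}$ and invoke Claim \ref{claim:BP_givenMstar_converges} --- is circular inside this paper: the proof of that claim goes through Lemma \ref{lemma:BP_A_monotonicity}, whose statement and proof already compare the iterates of $\mc{A}$ against an assumed fixed point and cite the present lemma. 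The clean fix is Brouwer's theorem (the map is continuous and sends the compact convex set $[0,W]^{2|E|}$ into itself), or an explicit construction of a fixed point from a stable outcome.

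Second, the structural assertions. Lemma \ref{lemma:BP_bayati_etal} and Proposition \ref{propo:BP_gives_dual_opt} are statements about the particular fixed point reached by BP-MWM from the all-zero initialization under a \emph{unique} LP optimum, whereas the present lemma concerns an \emph{arbitrary} fixed point of \eqref{eq:undamped_BP} assuming only that LP \eqref{prob:mwm_relaxation} has an integral optimum. Your key step --- that at a fixed point the matched partner $k$ of $i$ attains $\max_{k'\in\partial i}\off{k'}{i}^*$, ``by complementary slackness'' --- is therefore not available from the results you cite; it is essentially the content to be proved, and it cannot be deduced for an arbitrary fixed point from duality alone. Concretely, on the $4$-cycle $1$--$2$--$3$--$4$--$1$ with unit weights and $M^*=\{(2,3),(4,1)\}$, the vector with $\off{1}{2}^*=\off{2}{1}^*=\off{3}{4}^*=\off{4}{3}^*=1$ and all other messages $0$ is a fixed point, yet $w_{23}-\alf{2}{3}^*-\alf{3}{2}^*=-1<0$ on a matched edge and $w_{12}-\alf{1}{2}^*-\alf{2}{1}^*=1>0$ on an unmatched edge. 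So when the optimum is not unique, the asserted properties require the fixed point to be tied to $M^*$ (through the unique-optimum hypothesis, the $M^*$-dependent initialization \eqref{eq:BP_special_initialization}, or minimality in the ordering of Definition \ref{def:BP_givenMstar_ordering}), which a generic duality argument cannot supply. Your closing suggestion --- prove the claims directly from the fixed-point equations and LP duality, paralleling \cite{OurNewArxiv} --- is exactly what the paper does by citation, but your proposal defers precisely that step, and that step is the heart of the lemma.
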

The above lemma follows directly from the arguments in (\cite{OurNewArxiv}, Appendix F.1).

We now show how Claim \ref{claim:BP_givenMstar_converges} implies Claim \ref{claim:Mstar_to_eps_UD}.

\begin{proof}[Proof of Claim \ref{claim:Mstar_to_eps_UD}]
Using Lemma \ref{lemma:BP_fixed_point}, we can show
that for any fixed point $\baroff^*$ of Eq.~\eqref{eq:undamped_BP} for an instance such that LP \eqref{prob:mwm_relaxation} has
an integral optimum $M^*$, the following holds: For any $i \in V$, if $i$ is matched under $M^*$ to $j$ then
$j \in \argmax_{k \in \partial i} \off{k}{i}^*$ and $\off{j}{i}^*= w_{ij} - \alf{j}{i}^*$, whereas
if $i$ is unmatched under $M^*$ then $\off{k}{i}^*=0$ for every $k \in \partial i$.
We use the construction for $\bary^*$ described
in Section \ref{subsec:local_algorithm}, and essentially the same proof of Proposition \ref{propo:BP_gives_dual_opt} goes through.
A stable allocation  $(\bary^*, M^*)$ thus follows from any fixed point $\baroff^*$ of $\mc{A}$, where
$M^*$ is the given maximum weight matching.

Starting from a stable allocation, an $\eps$-UD solution can be constructed with effort $O(|E|/\eps^2)$
using {\sc Edge Balancing} as described in
Section \ref{sec:main_results}. The claim follows.
\end{proof}


We devote the rest of this subsection to the proof of Claim \ref{claim:BP_givenMstar_converges}.

Next, we define a useful partial ordering on message vectors $\baroff$.
\begin{definition}
\label{def:BP_givenMstar_ordering}
We say $\baroff \preceq \barsoff$ if the following hold:
\begin{align*}
\off{i}{j} \geq \soff{i}{j} & \qquad \forall \;(ij) \in M^*\\
\off{i}{j} \leq \soff{i}{j} & \qquad \forall \;(ij) \notin M^*
\end{align*}
\end{definition}

\begin{lemma}
\label{lemma:BP_A_monotonicity}
Let $\baroff^*$ be a fixed point of update rule \eqref{eq:undamped_BP}. Algorithm $\mc{A}$ satisfies
\begin{align}
\baroff^t \preceq \baroff^{t+1} \preceq \baroff^* \qquad \forall \; t \geq 0 \, .
\label{eq:A_monotonicity}
\end{align}
Also, for all $(ij) \notin M^*$, we have
\begin{align*}
\alf{i}{j}^{t+1} = \off{k}{i}^t \qquad \forall \; t \geq 0 \, ,
\end{align*}
where $(i,k) \in M^*$.
\end{lemma}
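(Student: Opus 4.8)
The plan is to prove both assertions of Lemma~\ref{lemma:BP_A_monotonicity} together by induction on $t$, using the special initialization \eqref{eq:BP_special_initialization} as the base case and exploiting the monotonicity of the update rule \eqref{eq:undamped_BP} with respect to the partial order $\preceq$ of Definition~\ref{def:BP_givenMstar_ordering}. First I would record the key \emph{anti-monotonicity} of one step of BP: if $\baroff \preceq \barsoff$, then applying \eqref{eq:undamped_BP} once yields $\mc{A}(\baroff) \succeq \mc{A}(\barsoff)$ is \emph{not} what we want; rather, since the map $x \mapsto (w-x)_+$ is non-increasing and $\max$ is non-decreasing, the composition $\baroff \mapsto \baroff'$ (one full round of updating $\alpha$ then $m$) is \emph{monotone} in the ordering $\preceq$ — because on an $M^*$ edge we want $m$ large, which needs $\alpha$ small, which (since $\alpha_{i\to j} = \max_{k\in\di\setminus j} m_{k\to i}$ over non-matching edges $ik$, by Lemma~\ref{lemma:BP_fixed_point}'s structural identity) needs those non-matching $m$'s small, consistent with $\preceq$. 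I would state and verify this monotonicity as the first step, i.e.\ that $\baroff^a \preceq \baroff^b \Rightarrow \mc{A}\text{-step}(\baroff^a) \preceq \mc{A}\text{-step}(\baroff^b)$.

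Next I would establish the base case $\baroff^0 \preceq \baroff^1$. For a matching edge $(ij)\in M^*$: $\off{i}{j}^0 = w_{ij}$ is the maximum possible message value, so trivially $\off{i}{j}^0 \geq \off{i}{j}^1$; wait — we need $\off{i}{j}^0 \geq \off{i}{j}^1$, which holds since $w_{ij} \geq (w_{ij}-\alf{i}{j}^1)_+$. For a non-matching edge $(ij)\notin M^*$: $\off{i}{j}^0 = 0 \leq \off{i}{j}^1$ trivially. So $\baroff^0 \preceq \baroff^1$. Then monotonicity of the $\mc{A}$-step propagates this up the chain to give $\baroff^t \preceq \baroff^{t+1}$ for all $t$. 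Separately, to get $\baroff^{t+1} \preceq \baroff^*$, I would first show $\baroff^0 \preceq \baroff^*$: on matching edges $\off{i}{j}^0 = w_{ij} \geq \off{i}{j}^*$ (any message is at most $w_{ij}$), and on non-matching edges $\off{i}{j}^0 = 0 \leq \off{i}{j}^*$; then apply monotonicity and the fact that $\baroff^*$ is a fixed point, so $(\mc{A}\text{-step})^t \baroff^0 \preceq (\mc{A}\text{-step})^t\baroff^* = \baroff^*$. For the last assertion, the identity $\alf{i}{j}^{t+1} = \off{k}{i}^t$ for $(ij)\notin M^*$ with $(ik)\in M^*$: this should follow because the monotone convergence sandwiches $\off{k}{i}^t$ between small values, and more directly from the structural fact (analogue of Lemma~\ref{lemma:BP_fixed_point}, which says at a fixed point $\alf{i}{j}^* = \off{k}{i}^*$) — I would verify inductively that the $\argmax$ in $\alf{i}{j}^{t+1} = \max_{k'\in\di\setminus j}\off{k'}{i}^t$ is always attained at the $M^*$-partner $k$ of $i$, using that $\off{k}{i}^t$ is increasing toward $\off{k}{i}^* = w_{ik} - \alf{k}{i}^* \geq$ all other incoming messages at $i$ (which are bounded above by their fixed-point values, themselves $\leq \off{k}{i}^*$).

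The main obstacle I anticipate is carefully justifying that the $\argmax$ at node $i$ for an outgoing non-matching message is \emph{always} the matching partner $k$ — at early times $t$ the messages are still small and it is not a priori obvious. I would handle this by strengthening the induction hypothesis to carry along the claim ``for every matched node $i$ with partner $k$, and every $t\geq 0$, $\off{k}{i}^t \geq \off{k'}{i}^t$ for all $k'\in\di\setminus k$,'' and verify it is preserved by one $\mc{A}$-step using the monotonicity already established plus Lemma~\ref{lemma:BP_fixed_point}; the base case holds since $\off{k}{i}^0 = w_{ik} > 0 = \off{k'}{i}^0$. Once this is in place, the identity $\alf{i}{j}^{t+1} = \off{k}{i}^t$ is immediate, and the two displayed inequalities of \eqref{eq:A_monotonicity} follow from the base case and monotonicity as above. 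Everything else is routine verification of non-expansiveness/monotonicity of $(w-x)_+$ and $\max$.
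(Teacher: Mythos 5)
Your skeleton (induction in $t$, the order $\preceq$, comparison with a fixed point, identifying the argmax with the $M^*$-partner) matches the paper's proof, but the tool you lean on in step 1 --- that one BP step is a monotone map for $\preceq$ on arbitrary comparable message vectors --- is false, and this leaves a genuine gap. For an unmatched outgoing direction, $(ij)\notin M^*$ with $(ik)\in M^*$, the update is $\off{i}{j}=(w_{ij}-\max_{k'\in\di\setminus j}\off{k'}{i})_+$, and the max mixes one matched incoming message (which $\preceq$ makes \emph{larger} in the smaller vector) with unmatched ones (made \emph{smaller}). Concretely, take $\di=\{j,k,l\}$ with $(il)\notin M^*$, $w_{ij}=8$, and compare $\off{k}{i}^a=5,\ \off{l}{i}^a=0$ against $\off{k}{i}^b=1,\ \off{l}{i}^b=10$ (other coordinates equal, weights $w_{ik},w_{il}$ large enough and $l$ matched elsewhere): then $\baroff^a\preceq\baroff^b$, yet one update produces $(8-5)_+=3$ on $\baroff^a$ versus $(8-10)_+=0$ on $\baroff^b$ for the unmatched message $\off{i}{j}$, violating $\preceq$ between the images. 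So your verification of monotonicity really only covers the easy matched-output direction, and the ``base case $+$ monotone map'' propagation of $\baroff^t\preceq\baroff^{t+1}$ and of $\baroff^t\preceq\baroff^*$ does not go through as stated. Your proposed repair is circular as written: you want to prove preservation of the domination property (``the $M^*$-partner's incoming message is largest'') \emph{using} the monotonicity already established, but that monotonicity is exactly what fails in the absence of domination.

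The fix --- and this is what the paper's induction actually does --- is to interleave the three statements instead of invoking a generic monotone-map lemma. First, $\baroff^t\preceq\baroff^*$ propagates \emph{without} any domination hypothesis: on matched output directions the max is over unmatched incoming messages only, and on unmatched output directions one needs only the one-sided bound $\alf{i}{j}^{t+1}\geq\off{k}{i}^{t}\geq\off{k}{i}^{*}=\alf{i}{j}^{*}$, the last equality being the fixed-point identity of Lemma \ref{lemma:BP_fixed_point}. Second, domination at time $t$ (equivalently the identity $\alf{i}{j}^{t+1}=\off{k}{i}^{t}$) then follows from $\baroff^t\preceq\baroff^*$: for every $l\in\di\setminus\{j,k\}$ one has $\off{l}{i}^{t}\leq\off{l}{i}^{*}\leq\alf{i}{j}^{*}=\off{k}{i}^{*}\leq\off{k}{i}^{t}$. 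Third, $\baroff^t\preceq\baroff^{t+1}$ on unmatched output directions uses domination at times $t-1$ and $t$ together with the inductive $\baroff^{t-1}\preceq\baroff^{t}$: $\alf{i}{j}^{t+1}=\off{k}{i}^{t}\leq\off{k}{i}^{t-1}=\alf{i}{j}^{t}$. Finally, note a directional slip in your last paragraph: under your (correct) order, matched messages are non-increasing in $t$ and stay \emph{above} their fixed-point values, i.e.\ $\off{k}{i}^{t}\geq\off{k}{i}^{*}$, not ``increasing toward $\off{k}{i}^{*}$''; it is precisely the inequality $\off{k}{i}^{t}\geq\off{k}{i}^{*}$ that closes the domination chain. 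With these corrections your argument becomes the paper's proof.
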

\begin{proof}
It is trivial to check validity for $t=0$. Suppose the result is true up to $t-1$. Take any fixed point $\baroff^*$. Then
\begin{align}
\baroff^t \preceq \baroff^*
\label{eq:BPoff_lt_fixedoff}
\end{align}
Now consider any unmatched edge $(ij) \notin M^*$. If $i$ is matched, say $(i,k) \in M^*$, it follows from Eq.~\eqref{eq:BPoff_lt_fixedoff} and Lemma \ref{lemma:BP_fixed_point} that
$\alf{i}{j}^{t+1} = \off{k}{i}^t$ as needed. Further, $\off{k}{i}^t \leq \off{k}{i}^{t-1} = \alf{i}{j}^t$, leading to
\begin{align*}
&\alf{i}{j}^{t+1} \leq \alf{i}{j}^t \\
\Rightarrow \hskip20pt &\off{i}{j}^{t+1} \geq \off{i}{j}^t.
\end{align*}
as needed. Else if $i$ is unmatched under $M^*$, $\off{k}{i}^t \leq \off{k}{i}^* = 0$ for all $k \in \partial i$. Hence,
$\alf{i}{j}^{t+1}= \alf{i}{j}^t = 0$, leading to $\off{i}{j}^{t+1} = \off{i}{j}^t =w_{ij}$. This suffices.

On the other hand, for every matched edge $(i,k)$,
\begin{align*}
&\alf{i}{k}^{t+1} = \max_{j \in \partial i \backslash k} \off{j}{i}^t \geq \max_{j \in \partial i \backslash k} \off{j}{i}^{t-1} = \alf{i}{k}^{t}\\
\Rightarrow \hskip20pt &\hskip60pt\off{i}{k}^{t+1} \leq \off{i}{k}^t.
\end{align*}
as needed.

The second inequality $\baroff^{t+1} \preceq \baroff^*$ can be established similarly.

Induction completes the proof.
\end{proof}
Next, we present a key construction leading to a proof of Claim \ref{claim:BP_givenMstar_converges} for the case that LP \eqref{prob:mwm_relaxation} has a unique optimum:

Choose a fixed point $\baroff^*$. A \emph{critical path} $P$ leading to a message $\off{i_1}{i_0}^*$  is constructed as follows:\\
\vspace{0.3cm}

\begin{tabular}{ll}
\hline
\multicolumn{2}{l}{ {\sc Critical Path}( Instance $G$, BP-fixed point $\baroff^*$)}\\
\hline
1: & $k\leftarrow 1$ \\
2: & While $\off{i_k}{i_{k-1}}^*>0$\\
3: & \hspace{0.2cm} Find $i_{k+1} \in \argmax_{j \in \partial i_k \backslash i_{k-1}} \off{j}{i_1}^*$.\\
4: & \hspace{0.2cm} $k \leftarrow k+1$\\
5: & \hspace{0.2cm} If $(i_{k},i_{k-1}) = (i_{l},i_{l-1})$ for some $l<k$ then \\
6: & \hspace{0.4cm} Break While; \\
7: & \hspace{0.2cm} End If \\
8: & End While\\
9: & Return $(i_k, i_{k-1}, \ldots, i_0)$\\
\hline
\end{tabular}

\vspace{0.3cm}
It is easy to see that a critical path can have at most $2|E|+1$ directed edges since the path is terminated if a directed edge repeats, and there are $2|E|$ distinct directed edges in the graph.



\begin{proof}[Proof of Claim \ref{claim:BP_givenMstar_converges}: Unique LP optimum case]

Take a fixed point $\baroff^*$. Consider any edge $(i_1, i_0)$. Let its critical path be $(i_k, i_{k-1}, \ldots, i_0)$. There are two cases:\\
Case (i): $\off{i_k}{i_{k-1}}^*=0$\\
This is the case where no directed edge repeats. In this case, we claim that $\off{i_1}{i_0}^{k-1}= \off{i_1}{i_0}^*$. We simply start with the evident $\off{i_k}{i_{k-1}}^0= \off{i_k}{i_{k-1}}^*$ and move sequentially along the critical path. Lemmas \ref{lemma:BP_fixed_point} and \ref{lemma:BP_A_monotonicity} ensure that $\alf{i_{k-1}}{i_{k-2}}^1=\off{i_k}{i_{k-1}}^0$, $\alf{i_{k-2}}{i_{k-3}}^2=\off{i_{k-1}}{i_{k-2}}^1$ and so on. The monotonicity established plays a key role here. This leads to $\off{i_{k-1}}{i_{k-2}}^1=\off{i_{k-1}}{i_{k-2}}^*$, $\off{i_{k-2}}{i_{k-3}}^2=\off{i_{k-2}}{i_{k-3}}^*$ and so on, leading to the result.
\item Case (ii): $(i_k,i_{k-1}) = (i_l,i_{l-1})$ for some $l<k$\\
In this case we know that all messages along the critical path are strictly positive on unmatched edges, and hence are not thresholded at 0. It follows, going around the directed alternating cycle, $\mathcal{C}=(i_k, \ldots, i_l)$ that for any directed edge $(\ell,\ell-1) \in \mathcal{C}$
\begin{align*}
\off{\ell}{\ell-1}^* \leq \off{\ell}{\ell-1}^* + \mbox{wt. of unmatched edges in }\mc{C} - \mbox{wt. of matched edges in }\mc{C} < \off{\ell}{\ell-1}^*
\end{align*}
since we have assumed that LP \eqref{prob:mwm_relaxation} has a unique solution. This is a contradiction.

Thus, Case (ii) never arises. Case (i) implies that $k \leq 2|E|$. Hence all messages converge to values at $\baroff^*$ in $2|E|$ iterations.
\end{proof}

Note that the proof above implies that the fixed point $\baroff^*$ is unique!

\begin{proof}[Sketch of proof of Claim \ref{claim:BP_givenMstar_converges}: non-unique LP optimum]

Claim \ref{claim:BP_givenMstar_converges} holds also for the case where LP \eqref{prob:mwm_relaxation} is tight but not pointed. We only briefly sketch the proof in this case. In the non-unique optimum case, max product may have multiple fixed points. However, there is a unique smallest fixed point $\baroff^{*,\bot}$ with respect to the partial ordering defined in \ref{def:BP_givenMstar_ordering}, and algorithm $\mathcal{A}$ converges to this fixed point by monotone convergence (cf. Eq.~\eqref{eq:A_monotonicity}).

In fact, we can show that the same bound $2|E|$ holds on the time to convergence. To prove this we compare against the special fixed point $\baroff^{*,\bot}$, whose minimality plays a crucial role. We use a similar critical path construction as for the unique optimum case. However, we have to be more careful here: we break ties in selecting an element of $\argmax_{j \in \partial  i_k \backslash i_{k-1}} \off{j}{i_1}^*$ non-deterministically (\emph{there exists} a sequence of tie-breaking choices such that ...). The same cases (i) and (ii) arise in the proof of message convergence using the critical path (cf. proof for unique LP optimum). Case (i) goes through as before. For Case (ii), we use the minimality of $\baroff^{*,\bot}$ to arrive at a contradiction.
\end{proof}

\section{An Example showing that Stability is Critical}
\label{app:example_stability_critical}

Let $n=|V|$. In this section, we construct a sequence of instances $(I_n, n \geq 8)$, such that for
each instance in the sequence the following holds. (a) The instance admits a UD solution.
(b) There is an outcome $(\bargamma, M^*)$ on a maximum weight
matching $M^*$ such that:
\begin{enumerate}
\item The outcome satisfies $\eps$-correct division for $\eps =  2^{-cn}$.
\item (Stability violation) There is a `bad' edge $(i,j) \notin M^*$ such that $\gamma_i + \gamma_j \leq w_{ij} - 1$
\end{enumerate}
where $c>0$ is a constant. Split fractions are bounded within $[r, 1-r]$ for arbitrary desired
$r \in (0, 1/2)$ ($c$ depends on $r$). Also, the weights are uniformly bounded by a constant $W(r)$.

Such a construction implies that we cannot hope to converge in worst case polynomial time to an approximate UD
solution, if we start the rebalancing process (cf. Table \ref{alg:edge_reb}) with an arbitrary allocation
corresponding to $M^*$. This is discussed in Section \ref{sec:stability_critical}. Thus, our strategy
of staying within the space of \emph{stable} configurations plays  a critical role.

We now define the instance $I_n$. Let us first consider $n= 8 N$, where $N \in \mathbb{Z}$. Later we show how to
extend the construction to arbitrary $n \geq 8$. The graph $G_n=(V_n, E_n)$ we will consider is a simple `ring'. More precisely,
$V_n=\{1, 2, \ldots, n\}$ and $E_n=\{(1,2), (2,3), \ldots, (n-1,n), (n,1) \}$. All edges have the same weight $W$.
This graph has two integral maximum weight matchings (cf. Remark \ref{rem:multiple_mwm_not_needed} below),
we pick $M^*= \{ (1,2), (3,4), \ldots, (n-1,n) \}$.
Given any $r \in (0, 1/2)$,
we define the split fractions as follows:
\begin{align*}
r_{1,2} &= r_{3,4} = \ldots = r_{2N-1, 2N} = r \\
r_{4N,4N-1} &= r_{4N-2,4N-3} = \ldots = r_{2N+2, 2N+1} = r
\end{align*}
Note that the values of the split fractiom on the edges $(2,3), (4,5), \ldots \notin M^*$ are irrelevant,
given our choice of matching.
As before, $r_{i,i+1}= 1- r_{i+1,i}$ is implicit.

For $l> 4N$, we define split fractions in a symmetrical way. Define `reflection' $\rfl : \{ 4N+1, 4N+2, \ldots, 8N\} \rightarrow \{ 1, 2, \ldots, 4N \}$
as
\begin{align}
\rfl (l) = 8N - l + 1
\end{align}
We set $r_{2i+1, 2i+2} = r_{\rfl(2i+1), \rfl(2i+1)}$ for all $i \in \{ 2N, 2N+1, \ldots, 4N-1\}$.

Note that the allocation in which each node earns $W/2$, together with matching $M^*$,
constitutes a UD solution for the instance defined.

Now we show how to construct an outcome $(\bargamma, M^*)$ satisfying properties 1 and 2 above.
Let $\beta \equiv (1-r)/r > 1$. Define
$$\eps' = \frac{1}{\beta^{N-1}} \,. $$
For $0 \leq i \leq N-1$ we choose
\begin{align}
\gamma_{2(N-i)} = \gamma_{2(N+i)+1} = \frac{W}{2}+ \frac{1}{2} + \frac{1-\beta^{-i}}{\beta - 1}
\end{align}
In each case $\gamma_{2j} = W - \gamma_{2j-1}$, since we want a valid outcome. Thus, we have defined $\gamma_1, \gamma_2, \ldots, \gamma_{4N}$.
The remaining earnings are defined as,
\begin{align}
\gamma_{i} = W- \gamma_{\rfl(i)} \qquad \mbox{for } i = 4N+1, 4N+1, \ldots, 8N
\end{align}
It is  easy to see that this definition satisfies the fixed sum constraints on all edges in $M^*$.

Importantly, note that it suffices to have $W \geq 1 + 2/(\beta-1)$ to ensure that this is a valid allocation with each
$\gamma_i \in (0,W)$. Choose $W\equiv 1 + 2/(\beta-1)$ (for example).

See that $\gamma_{6N}+\gamma_{6N+1} = W -1$ thus satisfying property 2. We show next that $(\bargamma, M^*)$ satisfies
the $\eps'$-correct division condition.

Consider the edge $(2(N-i), 2(N-i)+1)$ for $1\leq i \leq N-1$. It is easy to see that
$\gamma_{2(N-i)} + \gamma_{2(N-i)+1)} = W + \beta^{-i}$. Also, $\gamma_{2N}+\gamma_{2N+1} = W+1$.
It follows from a short calculation that
the exact `correct division' requirement Eq.~\eqref{eq:correct_division} is satisfied by the matched edges
$(3,4), (5,6), \ldots , (2N-1, 2N)$. For the matched edge $(1,2)$, note that
$\gamma_1+ \gamma_{8N} = W$, whereas $\gamma_2+ \gamma_3 = W+\beta^{-(N-1)} = W+\eps'$.
It follows that edge $(1,2)$ satisfies $\eps'$-correct division. Similar arguments take care of all
the other matched edges in the other three `quarters' of the ring (in fact, the argument can be completed
using symmetry). Thus, we have an outcome satisfying $\eps'$-correct division (cf. Definition \ref{def:eps_correct_div}).
Since $\eps' = \beta^{-(N-1)}$ it follows that property 1 above is satisfied provided $c$ is chosen
appropriately.

For $n \geq 8$, but not a multiple of 8, we simply use the construction above for $n' = 8 \lfloor n/8 \rfloor$ and
add a dummy component of size $n - n'$, disconnected from $G_{n'}$. Further, we fix a UD solution on the dummy component (any bipartite
graph has a UD solution).
Since $n' \geq n/2$, it follows that property 1 is satisfied
if $c$ is chosen appropriately. ($c\equiv (1/4)\log_2 \beta$ works for all $n \geq 8$.) Clearly, property 2 is also satisfied.

Note that it was only for the sake of simplicity that the example we gave had multiple maximum weight matchings.
\begin{remark}
\label{rem:multiple_mwm_not_needed}
Though the example constructed above has multiple maximum weight matchings, this is not necessary.
We can in fact, construct examples (that admit UD solutions) with the same properties 1 (for appropriate $c>0$) and 2 above, and $W= O(1)$,
where the weight of the maximum weight matching is at least $1$ more than the weight of the next heaviest
matching.
\end{remark}

%
\end{document}